\def\idrm#1{\ensuremath{\mathrm{#1}}}
\def\etal{\emph{et~al.}}
\newcommand{\no}[1]{}
\renewenvironment{proof}{\trivlist\item[]\emph{Proof}:}%
{\unskip\nobreak\hskip 1em plus 1fil\nobreak$\Box$
\parfillskip=0pt%
\endtrivlist}
\newenvironment{itemize*}%
  {\begin{itemize}%
    \setlength{\itemsep}{0pt}%
    \setlength{\parskip}{0pt}%
    \setlength{\parsep}{0pt}%
    \setlength{\topsep}{0pt}%
    \setlength{\partopsep}{0pt}%
  }%
  {\end{itemize}}%
\newcommand{\cT}{{\cal T}}
\newcommand{\tsa}{t_{SA}}
\newcommand{\ou}{\tilde{u}}
\newcommand{\ov}{\tilde{v}}
\newcommand{\ow}{\tilde{w}}
\newcommand{\tP}{\tilde{P}}
\newcommand{\shortver}[1]{}
\newcommand{\longver}[1]{#1}
\newcommand{\shlongver}[2]{#2}
\newcommand{\occ}{\idrm{occ}}
\newcommand{\eps}{\varepsilon}
\begin{document}
\title{Space-Efficient String Indexing for Wildcard Pattern Matching}
\author[1]{Moshe Lewenstein}
\author[2]{Yakov Nekrich}
\author[2]{Jeffrey Scott Vitter}
\affil[1]{Department of Computer Science, Bar-Ilan University} \affil[2]{Department of Electrical Engineering and Computer Science, University of Kansas}

\authorrunning{M. Lewenstein, Y. Nekrich and J. S. Vitter} 


\date{}
\maketitle

\begin{abstract}
In this paper we describe compressed indexes that support pattern matching queries for strings with wildcards.  For a constant size alphabet our data structure  uses $O(n\log^{\eps}n)$ bits for any $\eps>0$ and reports all $\occ$ occurrences of a wildcard string in $O(m+\sigma^g \cdot\mu(n)  + \occ)$ time, where $\mu(n)=o(\log\log\log n)$, $\sigma$ is the alphabet size, $m$ is the number of alphabet symbols and $g$ is the number of wildcard symbols in the query string. We also present an $O(n)$-bit index with $O((m+\sigma^g+\occ)\log^{\eps}n)$ query time and an $O(n(\log\log n)^2)$-bit index with $O((m+\sigma^g+\occ)\log\log n)$ query time.
These are the first non-trivial data  structures for this problem that need $o(n\log n)$ bits of space.
\end{abstract}

\section{Introduction}
\label{sec:intro}
In the string indexing problem, we pre-process a source string $T$, so that all occurrences of a query string $P$ in $T$ can be reported.  This is one of the most fundamental data structure problems. While handbook data structures, suffix arrays and suffix trees, can answer string matching queries efficiently, 
they store the source string $T$ in  $\Theta(\log n)$ bits of space per symbol. In situations when massive amounts of data must be indexed, the space usage  can become an issue.  Compressed indexes that use $o(\log n)$ or even $H_0$ bits per 
symbol, where $H_0$ denotes the zero-order entropy,  were studied extensively. We refer the reader to~\cite{MakinenN08} for a survey of results on compressed indexing.

In many scenarios we are interested in reporting all occurrences 
of strings  that resemble the query string $\tP$ but do not have to be identical to $\tP$. The problem of approximate pattern matching is important for biological applications and information retrieval and has received considerable attention~\cite{ColeGL04,LamSTY07,RahmanI07,TamWLY09,BilleGVV12,ChanLSTW11}. In this paper we consider a variant of the approximate pattern matching when the query string $\tP$ may contain wildcards (don't care symbols), and the wildcard symbol matches any alphabet symbol.

The standard indexing data structures can be used to answer wildcard pattern matching queries. A pattern $\tP$ with $g$ wildcard symbols matches $\sigma^g$ different patterns, where $\sigma$ denotes the size of the alphabet. We can generate all patterns that match $\tP$ and report all $\occ$ occurrences of these patterns (and hence all occurrence of $\tP$) in $O(m\cdot\sigma^g +\occ)$ time, where $m$ is the number of alphabet symbols. If the maximal number of wildcards in a query is bounded by $k$ ($k$-bounded indexing), we can  store a compressed trie with all possible combinations of $k$ wildcard symbols for every suffix. Then a query can be answered in $O(|\tP|+ \occ)$ time, but the total space usage is $O(n^{k+1})$ words of $\Theta(\log n)$ bits.

Cole \etal~\cite{ColeGL04} presented an elegant data structure for $k$-bounded indexing. Their solution needs $O(n\log^kn)$ words of space and answers wildcard queries in $O(m+2^g\log\log n +\occ)$ time. Very recently this has been improved in~\cite{LMRT:13} to $O(n\log^{k+\eps}n)$ bits of space with the same query time as Cole \etal~\cite{ColeGL04}. Bille~\etal~\cite{BilleGVV12} obtained another  trade-off: for any pre-defined $k$ and $\beta$, their $k$-bounded index uses $O(n\log n \log_{\beta}^{k-1}n)$ words and answers queries
in $O(m+\beta^g\log\log n +\occ)$ time. These indexes can provide fast answers to wildcard queries when the number of wildcards is small. However the space usage of the above data 
structures is high even when $k$ is a constant. For super-constant values of $k$ (for instance, when the maximal number of wildcards is bounded by $\log\log n$) the cost of storing the data structure may become prohibitive. 

Another line of research is the design of data structures that 
use linear or almost-linear space and support queries with 
an arbitrarily large number of wildcards.
Cole \etal~\cite{ColeGL04} describe a data structure
that uses $O(n\log n)$ words and answers queries in $O(m+\sigma^g\log\log n +\occ)$ time.  Iliopoulos and Rahman~\cite{RahmanI07} and Lam \etal~\cite{LamSTY07} describe linear-space indexes; however, their data structures need $\Theta(n)$ worst-case time to answer a query. Recently, Bille \etal~\cite{BilleGVV12} described an $O(n)$-words data structure that answers queries in $O(m+\sigma^g\log\log n +\occ)$ time.
\no{Chan \etal~\cite{ChanLSTW11} describe an index for a related 
$k$-mismatch problem: all substrings that match the query string with at most $k$ mismatches must be reported. Their data structure uses $O(n)$ bits and answers queries in  }

\begin{table}[t]
  \centering
  {\fontsize{12}{12}
    \begin{tabular}{|l|c|c|} \hline
    Ref. & Space Usage & Query Time \\ \hline
         \cite{ColeGL04}  & $O(n\log n)$ words & $O(m+\sigma^g\log\log n +\occ)$ \\
         \cite{BilleGVV12} & $O(n)$ words & $O(m+\sigma^g\log\log n +\occ)$\\ \hline 
         New &  $O(n\log^{\eps} n\log \sigma)$ bits & $O(m+\sigma^g\sqrt{\log^{(3)}n} +\occ)$ \\
         New & $O(n(\log \log n)^2\log \sigma)$ bits & $O((m+\sigma^g +\occ)\log\log n)$\\
         New & $O(n\log\sigma)$ bits & $O((m+\sigma^g +\occ)\log^{\eps} n)$\\ \hline
       \end{tabular}
}
       \caption{Previous and new results on unbounded wildcard indexing; $m$ and $g$ denote the number of alphabet symbols and wildcards in the query pattern.}
       \label{tbl:results}
     \end{table}

When the amount of stored data is very large, even linear space usage can be undesirable. While numerous compressed indexes for exact pattern matching are known, there are no previously described data structures for wildcard indexing that use $o(n\log n)$ bits. In this paper we present  sublinear space indexes  for wildcard pattern matching. Our results are especially conspicuous when the alphabet size is constant.  Our first data structure uses $O(n\log^{\eps}n)$ bits and reports occurrences of a wildcard pattern in $O(m+\sigma^g\sqrt{\log^{(3)}n} +\occ)$ time\footnote{$\log^{(3)}n=\log\log \log n$.}; henceforth $\eps$ denotes an arbitrarily small positive constant. Thus we improve both the space usage and the query time of the previous best data structure~\cite{BilleGVV12}. The space usage can be further decreased at cost of slightly increasing the query time. We  describe two indexes that use $O(n)$ and $O(n(\log \log n)^2)$ bits of space; queries are supported in $O((m+\sigma^g +\occ)\log^{\eps} n)$ and $O((m+\sigma^g +\occ)\log \log n)$ time respectively. Previous and new results with worst-case efficient query times are listed in Table~\ref{tbl:results}. 

In this paper we assume, unless specified otherwise, that the 
alphabet size is a constant.
But our techniques are also relevant for  the case when the alphabet size is arbitrarily large. We can obtain an $O(n\log \sigma)$-bit data structure that answers queries in $O((m+\sigma^g+\occ)\log_{\sigma}^{\eps}n)$ time. We can also obtain an $O(n\log n)$-bit data structure that supports queries in 
$O(m+\sigma^g+\occ)$ time if $\sigma\ge \log\log n$. Other interesting trade-offs are 
possible and will be described in the full version of this paper. 

In Section~\ref{sec:prelim}, we recall some results related to 
compressed suffix trees and suffix arrays and compressed data structures for a set of integers. We also define the unrooted LCP queries, introduced in Cole~\etal~\cite{ColeGL04}, that are the main tool in all currently known efficient structures for wildcard indexing.  In Section~\ref{sec:smallset} we describe data structures that answer unrooted LCP queries on a small subtree 
of the suffix tree. Our data structures need only a small number of additional bits if the (compressed) suffix tree and suffix array of the source text are available. 
In Section~\ref{sec:lessspace}, we describe compact data structures that answer LCP queries and wildcard pattern matching queries on an arbitrarily large suffix tree. These data structures are based on  a subdivision of suffix tree nodes into small subtrees. 
In Sections~\ref{sec:case0},~\ref{lemma:case1}, and \ref{sec:case2} we show how we can speed-up the data structures from~\cite{ColeGL04}, \cite{BilleGVV12} and retain $o(n\log n)$ 
space usage. The main component of our improvement is  a method for  processing batches of unrooted LCP queries. In previous works~\cite{ColeGL04,BilleGVV12} LCP queries were answered one-by-one.

\section{Preliminaries}
\label{sec:prelim}
{\bf Unrooted LCP Queries.}
In this paper $s_1\circ s_2$ denotes the concatenation of strings $s_1$ and $s_2$ and $\cT$ denotes the  suffix tree of the source text.
A string $str(v,u)$ is obtained by concatenating labels of all edges on the path from $v$ to $u$ and  $str(u)=str(v_r,u)$ for the root node $v_r$ of $\cT$.
A \emph{location} on a suffix tree $\cT$ is an arbitrary position on an edge of $\cT$; a location on an edge $(v,u)$ 
can be uniquely identified by specifying the edge $(u,v)$ and the offset from the upper node of $(u,v)$. We can straightforwardly extend the definitions 
of $str(\ov,\ou)$ and $str(\ou)$ to arbitrary locations $\ou$ and $\ov$. 
The unrooted LCP query $(v,P)$, defined in~\cite{ColeGL04}, asks for the lowest
descendant location $\ou$ of a node $v$, such that $str(v,\ou)$ is a prefix of a string $P$.
 Thus an unrooted LCP query provides the answer to the following question: if we were to search  for a pattern $P$ in a subtree with root $v$, where would the search end? While we can obviously answer this question in $O(|P|)$ time by traversing the trie starting at $v$, faster solutions are also possible.

As in the previous works~\cite{ColeGL04,BilleGVV12}, we consider the following two-stage scenario for answering queries: during the first stage an arbitrary string $P$ is pre-processed in $O(|P|)$ time; during the second stage, we answer queries 
$(u, P_j)$ for any suffix $P_j$ of $P$ and any $u\in \cT$.  
Cole \etal~\cite{ColeGL04} described an $O(n\log^2 n)$-bit data structure that answers unrooted LCP queries in $O(\log\log n)$ time. Bille~\etal~\cite{BilleGVV12} improved the space usage to linear ($O(n\log n)$ bits). \\

{\bf Compressed Suffix Arrays and Suffix Trees.}
The suffix array $SA$ for a text $T$ contains starting positions of $T$'s suffixes sorted in lexicographic order: $SA[i]=k$ 
if the suffix $T[k..n]$ is the $k$-th smallest suffix of the text $T$. We will say that $i$ is the rank of the suffix $T[k..n]$. An inverse suffix array stores information about lexicographic order of suffixes: $SA^{-1}[k]=i$ iff $SA[i]=k$. 
We will say that a data structure provides a suffix array functionality in time $\tsa$ if it enables us to compute $SA[i]$ and $SA^{-1}[k]$ for any $1 \le i,k\le n$ in $O(\tsa)$ time.
A number of compressed data structures provide suffix array functionality in little time. 
\setlength{\tabcolsep}{3em}
\begin{lemma}
\label{lemma:csa}
If the alphabet size $\sigma=O(1)$, the following trade-offs for space usage $s(n)$ and $\tsa$ are possible:
(a) $s(n)=O((1/\eps)n)$  and $\tsa(n)=O(\log^{\eps}n)$, or 
(b) $s(n)=O(n\log\log n)$ and  $\tsa(n)=O(\log\log n)$, or
(c) $s(n)=O(n\log^{\eps}n)$  and $\tsa(n)=O(1)$
for any constant $\eps>0$
\end{lemma}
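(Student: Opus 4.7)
The plan is to derive all three trade-offs as instantiations of a single compressed-suffix-array template, obtained by combining a navigable encoding of the $\Psi$-function (equivalently, of the BWT together with the LF-mapping) with samples of $SA$ and $SA^{-1}$ taken at controlled intervals. Given such an encoding, the standard recipe for answering $SA[i]$ is to walk along $\Psi$ from rank $i$ until reaching a sampled rank, and then return that sample corrected by the number of steps taken; $SA^{-1}[k]$ is handled symmetrically by sampling in text order and walking in the opposite direction.

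For part~(a), I would start from the classical $O(n)$-bit compressed suffix array for a constant alphabet, in which a single $\Psi$-step costs constant time via binary rank/select on the BWT. Sampling $SA$ and $SA^{-1}$ at rate $\log^{1+\eps}n$ occupies $O(n/\log^{\eps}n)$ additional bits, and each walk is bounded by $O(\log^{1+\eps}n)$ steps; rescaling $\eps$ then gives the claimed $O(\log^{\eps}n)$ access time within $O(n)$ bits.

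For part~(b), I would either increase the sampling density or augment the representation with a skip-pointer structure that advances $\log\log n$ steps along $\Psi$ in a single application; either option fits into $O(n\log\log n)$ bits and caps the walk length at $O(\log\log n)$. For part~(c), the enlarged budget of $O(n\log^{\eps}n)$ bits allows storing $SA$ and $SA^{-1}$ samples at rank distance $O(\log^{\eps}n)$, together with a precomputed table that evaluates any iterate $\Psi^{j}$ with $j \le \log^{\eps}n$ in constant time, which yields $O(1)$ access.

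The main delicate point I anticipate is ensuring that $SA^{-1}$ matches the forward $SA$ bound in each trade-off, since symmetric samples are needed in both text order and lexicographic order without doubling the space budget. This is, however, a standard adjustment already carried out in the surveyed compressed-indexing literature~\cite{MakinenN08}, so in my proof I would simply cite the corresponding constructions with the parameters listed above and verify that the constant-alphabet specialisation collapses all logarithms of $\sigma$ to constants.
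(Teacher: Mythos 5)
Your proposal relies on the classical ``sample $SA$/$SA^{-1}$ at rate $d$ and walk along $\Psi$'' mechanism. That mechanism, however, does not achieve any of the three stated trade-offs, and the quantitative parameters you give do not balance out. With a $\Psi$-walk of length at most $d$ and explicit samples every $d$ positions (each sample being $\Theta(\log n)$ bits), the space for the samples is $\Theta((n/d)\log n)$ and the time is $\Theta(d)$; for a constant alphabet the underlying $\Psi$/BWT encoding adds $\Theta(n)$ more bits. There is no choice of $d$ that places you on the stated trade-off curve. In part~(a), sampling at rate $\log^{1+\eps}n$ makes each walk $\Theta(\log^{1+\eps}n)$ steps, which is $\omega(\log^{\eps'}n)$ for every constant $\eps'$, so rescaling $\eps$ does not repair it; and sampling at rate $\log^{\eps}n$ to get the right time would cost $\Theta(n\log^{1-\eps}n)$ bits, not $O(n)$. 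In part~(b), sampling at rate $\log\log n$ would cost $\Theta(n\log n/\log\log n)$ bits, which exceeds $O(n\log\log n)$ by essentially a $\log n$ factor. In part~(c), a table of iterates $\Psi^{j}$ for all $j\le\log^{\eps}n$ and all $n$ positions occupies $\Theta(n\log^{1+\eps}n)$ bits, not $O(n\log^{\eps}n)$, and even one full array of $\Psi^{j}$ for a single fixed $j$ already takes $\Theta(n\log n)$ bits.

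The constructions cited in the paper (Sadakane for~(a), Rao for~(b) and~(c)) do not use linear $\Psi$-sampling at all; they use the hierarchical/level decomposition of Grossi and Vitter, in which the suffix array is recursively reduced to a subarray over a subsample of the text positions, with $\Psi$ pointing from each level to the next. The geometric shrinkage of the levels is what lets the space for the navigation structure stay controlled, and the number of levels is the knob that produces the $(O((1/\eps)n),\,O(\log^{\eps}n))$, $(O(n\log\log n),\,O(\log\log n))$ and $(O(n\log^{\eps}n),\,O(1))$ points on the curve; Rao's refinement replaces the usual halving recursion with a coarser subdivision to reach the latter two. So the gap in your argument is structural, not just a matter of citing a survey for the $SA^{-1}$ symmetry: you need the multi-level recursion, and the single-level sample-and-walk scheme provably cannot be parameterised to reach any of the three target points.
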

\begin{proof}
  Result (a) is shown in~\cite{Sadakane00} and results (b), (c) are from~\cite{Rao02}
\end{proof}
If $SA[t]= f$ the function $\Psi^i(t)$ computes the position of the suffix $T[f+i..n]$ in the suffix array. This function can be computed in $O(\tsa)$ time as $SA^{-1}[SA[t]+i]$. 
Let the string depth of a node $v\in \cT$ be the length $str(v)$.
If the suffix array functionality is available, we can  store the suffix tree in $O(n)$ additional bits, so that the string depth of any node $v$ can be computed in $O(\tsa)$ time~\cite{Sadakane07,FischerMN09,RussoNO11}. 


Using $O(n)$ additional bits, we can process a string $P$ in $O(|P|t_{SA})$ time and find for any suffix $P^j=P[j..|P|]$ of $P$:
(i) the rank $r_j$ of $P^j$ in $T$ and (ii) the longest common prefix (LCP) of $P^j$ and the suffixes $SA[r_j]$, $SA[r_j+1]$ of $T$. Using McCreight's procedure for inserting a new string into a generalized suffix tree, we  find the locations where suffixes
of $P$ must be inserted into $\cT$: first we traverse the suffix tree starting at the root and find the location corresponding to $P[1..|P|]$ in the suffix tree; then we find locations of $P[2..|P|]$, $\ldots$, $P[|P|-1..|P|]$, $P[|P|]$ by following the suffix links. Next, we compute the string depths of these locations. The total time needed to find the locations and their depths in a compressed suffix tree is 
$O(|P|\tsa)$.  When the rank $r_j$  of $P^j$ and LCPs of $P^j$ and its neighbors are known, we can use this information to compute 
the LCP of $P^j$ and any suffix $SA[q]$ in $O(\tsa)$ time: 
if $q<r_j$, $LCP(P^j,SA[q])$ is the minimum of $LCP(P^j,SA[r_j])$ and $LCP(SA[r_j],SA[q])$; the case $q>r_j$ is symmetric. 
Sadakane~\cite{Sadakane07} showed how to compute $LCP(SA[r_j],SA[q])$ in $O(\tsa)$ 
time. Hence, we can compute the LCP for any two suffixes of $P$ 
and $T$ in $O(\tsa)$ time after $O(|P|\tsa)$ pre-processing time.

{\bf Heavy Path Decomposition.}
Let $\cT$ be an arbitrary tree. We can decompose $\cT$ into disjoint root-to-leaf paths, called \emph{heavy paths}. 
If an internal node $u\in \cT$ is on a heavy path $p$, then its heaviest child $u_i$ (that is, the child with  the greatest number of leaf descendants) is also on $p$. 
If the child $u_j$ of $u$ is not on $p$, then $u$ has at least twice as many leaf descendants as $u$. 
Therefore the heavy-path decomposition of $\cT$ guarantees that  \emph{any} root-to-leaf path in $\cT$ intersects with at most $\log n$ heavy paths; we refer to~\cite{HarelT84} for details.

{\bf Searching in a Small Set.} 
We can search in a set with a poly-logarithmic number of elements
using the data structure called an atomic heap~\cite{FW94}. An atomic heap on a set of integers~$S$, $|S|=\log^{O(1)} n$, uses linear space and enables us to find for any integer $q$ the largest $e\in S$ such that $e\le q$ (respectively, the smallest $e\in S$ such that $e\ge q$) in $O(1)$ time. Using the result of Grossi~\etal~\cite{GrossiORR09}, we can search in a small set  using small 
additional space and only one access to elements of~$S$. 
\begin{lemma}[\cite{GrossiORR09}, Lemma 3.3]
\label{lemma:small}
Suppose that $|S|=\log^{O(1)}n$ and $e\le n$ for any $e\in S$.
There exists a data structure $D$ that uses $O(|S|\log\log n)$ 
additional bits and answers predecessor and successor queries 
on $S$ in $O(1)$ time. When a query is answered, only one element $e'\in S$ needs to be accessed.  
\end{lemma}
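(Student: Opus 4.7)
The plan is to build a blind (Patricia) trie over $S$ and answer each query by a word-parallel descent followed by a single verifying probe into $S$. After sorting the elements, the compact trie on $|S|$ elements has at most $|S|-1$ internal nodes, each labelled by its \emph{critical bit position}---the bit at which its two subtries disagree. Since every element is bounded by $n$, each critical position fits in $\lceil\log\log n\rceil+O(1)$ bits, and each leaf stores a pointer into $S$ of $O(\log|S|)=O(\log\log n)$ bits. A succinct tree encoding uses $O(|S|)$ bits for the topology, so the total additional space is $O(|S|\log\log n)$ bits, independent of the actual keys held in $S$.

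Given a query $q$, I would perform a blind descent: at each internal node with critical bit $b$, branch according to bit $b$ of $q$. The descent uses no access to $S$ and terminates at a leaf $\ell$; let $e'=S[\ell]$ be the single allowed probe. By a standard Patricia-trie argument, if $b^\star$ denotes the most significant bit at which $q$ and $e'$ differ, then the true predecessor (respectively successor) of $q$ in $S$ is the rightmost (leftmost) leaf in the sibling subtree of the deepest ancestor of $\ell$ whose critical bit position exceeds $b^\star$. All the required structural information---the critical bits on the root-to-$\ell$ path and the extremal leaf in each subtree---can be precomputed and stored inside the trie, so no further probe to $S$ is needed.

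The main obstacle is compressing the blind descent, the bit-scan for $b^\star$, and the walk-up into $O(1)$ operations on the word RAM, since the trie may have depth $\Theta(\log n)$. I would dispatch this by exploiting the hypothesis $|S|=\log^{O(1)}n$: the full trie description---topology, critical positions, and leaf pointers---occupies only $\log^{O(1)}n$ bits, hence a constant number of machine words. Equipped with a precomputed universal lookup table over such short descriptions, or equivalently the Fredman--Willard atomic heap cited above, the ancestor/predecessor search among the $O(|S|)$ critical positions along the path to $\ell$ reduces to $O(1)$ table probes. Extracting the bits of $q$ at the critical positions---needed both for the descent and to locate the relevant ancestor once $b^\star$ is known---is the standard fusion-tree sketching primitive, a constant-time sequence of shifts, masks, and multiplications. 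Together these give constant query time with exactly one probe to $S$, within the claimed $O(|S|\log\log n)$-bit bound on additional space.
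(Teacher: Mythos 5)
The paper does not prove this statement; it cites it verbatim from Grossi, Orlandi, Raman and Rao \cite{GrossiORR09} (their Lemma~3.3), so there is no in-paper argument to compare against. Your setup is the right one: a blind (Patricia) trie whose internal nodes carry only $O(\log\log n)$-bit critical positions, keys living externally in $S$, a blind descent followed by a single verifying probe, and a space count of $O(|S|)$ bits of topology plus $O(\log\log n)$ bits per node and per leaf pointer.

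The gap is in the constant-time claim. You assert that ``the full trie description \ldots\ occupies only $\log^{O(1)}n$ bits, hence a constant number of machine words.'' That inference is wrong: a word holds $\Theta(\log n)$ bits, so a structure of $\Theta(|S|\log\log n)$ bits with $|S|=\log^{O(1)}n$ spans $\log^{O(1)}n$ words, not $O(1)$. Even a single root-to-leaf path carries up to $\Theta(\min(|S|,\log n))$ critical positions, i.e.\ up to $\Theta(\log n\cdot\log\log n)$ bits, which already overflows a word, so a single universal-table lookup cannot digest it. Your fallback, the Fredman--Willard atomic heap, does give $O(1)$ predecessor on polylog-size sets, but it \emph{stores the keys} and hence uses $\Theta(|S|\log n)$ bits with many probes---this violates both the $O(|S|\log\log n)$-bit budget and the ``one access to $S$'' requirement, and is precisely the baseline the lemma is meant to beat (the paper says so in the sentence introducing it). What is missing is the hierarchical packing that makes the descent, the scan for $b^\star$, and the walk-up $O(1)$ while touching only $O(1)$ of the $\Omega(\log n)$ words: in \cite{GrossiORR09} this is a bounded-depth, fusion-tree-style arrangement in which each node has fan-out $\Theta(\log^{1/k}n)$ and stores only the $O(\log\log n)$-bit branching positions of its children, so a node fits in one word and is searched by word-parallel comparison, and the depth is $O(k)=O(1)$ because $|S|=\log^{O(1)}n$; the lone probe to $S$ happens at the leaf, after which a second constant-depth pass finishes off the predecessor/successor. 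Without an argument of this kind the claimed $O(1)$ query time does not follow from your construction.
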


\section{Unrooted LCP Queries on Small Sets }
\label{sec:smallset}
 
 
In this section we describe compact data structures that answer  LCP queries on a small set of suffixes. We consider a set $S$ that contains a poly-logarithmic 
number of consecutive suffixes from the suffix array of $S$. Our data structure supports queries 
of the form $(u_0,P)$ where $u_0\in\cT_0$ and $\cT_0$ is a subtree of the suffix tree $\cT$ induced by suffixes from $S$;
the query answer is the lowest location  $\ov\in\cT_0$ below $\ou$, 
such that $str(u_0,\ov_0)$ is a prefix of $P$. These data structures are an important building block of data structures that will be constructed in the following sections and a key to space-saving solution: we will show in section~\ref{sec:lessspace} how a suffix tree can be divided into small subtrees. In this section we show how unrooted LCP queries can be supported on such small subtrees.
The main idea is to keep the (ranks of) suffixes  in succinct predecessor data structures that need $O(\log \log n)$ additional bits per element; we do not have to store the  ranks in these data structures because they can be retrieved in $O(\tsa)$ time using the (compressed) suffix tree and the (compressed) suffix array. Thus we can answer unrooted LCP queries on $\cT_0$ using $O((\log\log n)^2)$ bits per suffix. We assume in the rest of this section that $S$ contains $f=O(\log^3n)$ consecutive suffixes and $\cT_0$ 
is a subtree of the suffix tree induced by suffixes from $S$.

\begin{lemma}\label{lemma:unrooted2}
There exists a data structure that uses $O(f (\log \log n)^2)$ additional bits of space and answers unrooted LCP queries on $\cT_0$ in $O(1)$ time. We assume that our data structure can access the suffix tree of $T$, the suffix array of $T$, the inverse suffix array of $T$,  and a universal look-up table of size $O(n^g)$ for an arbitrarily small positive constant $g$. 
\end{lemma}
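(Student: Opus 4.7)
The plan is to reduce each unrooted LCP query $(u_0, P^j)$ to a predecessor query on ``shifted'' SA ranks of leaves, implement these predecessors via Lemma~\ref{lemma:small} distributed along a heavy-path decomposition of $\cT_0$, and accelerate the resulting descent with the universal lookup table.

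First, the reduction. After the $O(|P|\tsa)$-time preprocessing of $P$ described in Section~\ref{sec:prelim}, we know for each pattern suffix $P^j$ its rank $r_j$ in the SA of $T$ together with the LCPs of $P^j$ with its two lexicographic SA-neighbors; combined with an $O(n)$-bit RMQ over the LCP array this allows $LCP(P^j, T[SA[q]..n])$ to be computed in $O(1)$ for any rank $q$. For the query $(u_0, P^j)$ the best matching depth is $\lambda = \max_{\ell} LCP(P^j, T[SA[\ell]+d(u_0)..n])$ with $\ell$ ranging over leaves of $\cT_0$ in the subtree of $u_0$; since $T[SA[\ell]+d(u_0)..n]$ has rank $\Psi^{d(u_0)}(\ell) = SA^{-1}[SA[\ell]+d(u_0)]$ in $T$'s SA, this maximum is attained at the leaf $\ell^*$ whose shifted rank is closest (as predecessor or successor) to $r_j$. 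The answer $\ov$ then sits at string depth $d(u_0)+\lambda$ on the path $u_0\to\ell^*$ and is extracted by an $O(1)$-time weighted level-ancestor query on $\cT_0$.

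To store the predecessors cheaply I would heavy-path-decompose $\cT_0$, using the property that every leaf of $\cT_0$ belongs to the subtrees of only $O(\log f) = O(\log\log n)$ heavy-path tops. For each heavy path $h$ with top $a_h$ I build one instance of Lemma~\ref{lemma:small} on the shifted-rank set $\{\Psi^{d(a_h)}(\ell) : \ell \text{ a leaf in subtree}(a_h)\}$; since each such rank is recoverable in $O(1)$ from $\ell$ through one SA and one ISA access, the rank values are never stored explicitly and the instance occupies only $O(|\text{subtree}(a_h)|\log\log n)$ bits, so summing over all heavy paths gives the claimed $O(f(\log\log n)^2)$ bits. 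The query descends heavy path by heavy path starting at $u_0$: on the current path $h$, compute in $O(1)$ the LCP of $P^j$ with $T[SA[b_h]+d(u_0)..n]$ to find how far the match extends along $h$, return the stopping location if it lies inside an edge, otherwise test the $O(1)$ light children of the stopping node for one whose first edge character equals the appropriate character of $P^j$ and recurse on that child's heavy path. Naively this is an $O(\log\log n)$-step descent, which I would collapse to $O(1)$ time by a four-Russians-style tabulation in the universal $O(n^g)$-bit table, indexing entries by the local micro-shape of $\cT_0$ around $u_0$ (subdivided so that it encodes in $O(\log n)$ bits), by the position of $r_j$ relative to the shifted-rank sets of the visited heavy paths, and by the LCP outcomes encountered on the way.

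The main obstacle is the mismatch between the shift $d(a_h)$ built into $D_h$ and the shift $d(u_0)$ really needed at an interior query node on $h$: because $\Psi^{\delta}$ is in general not order-preserving, $D_h$ cannot directly serve as a predecessor on the $\Psi^{d(u_0)}$-ranks. My resolution is to use $D_h$ only to identify the character-driven light-child branching at the stopping node (a lookup that is independent of the shift), while the LCP that bounds the descent along $h$ is computed in $O(1)$ directly from $\Psi^{d(u_0)}(b_h)$ retrieved through SA and ISA; orchestrating this together with the four-Russians compression so that a single lookup in the universal table returns $\ov$ is the delicate technical step.
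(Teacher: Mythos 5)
Your opening reduction (predecessor on shifted SA ranks, ranks recomputable rather than stored, Lemma~\ref{lemma:small} packing) is in the spirit of the paper, but the way you distribute the predecessor structures across the heavy-path decomposition is different and introduces a real gap that the paper does not have.

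You attach one Lemma~\ref{lemma:small} instance $D_h$ to each heavy path $h$, with shift fixed at $d(a_h)$, the depth of the path's top. You then (correctly) notice the resulting shift mismatch at an interior query node $u_0$ on $h$: $\Psi^{\delta}$ is not order-preserving, so a $\Psi^{d(a_h)}$-predecessor is not a $\Psi^{d(u_0)}$-predecessor. Your repair abandons the predecessor outcome and falls back to a heavy-path-by-heavy-path descent of $O(\log f)=O(\log\log n)$ steps, which you propose to collapse to $O(1)$ via a four-Russians tabulation. This is where the argument is incomplete: you never show how the micro-instance — the local topology together with the outcomes of up to $O(\log\log n)$ predecessor/LCP comparisons against rank values in $[1,n]$ — can be encoded in $O(\log n)$ bits so that a universal table of size $O(n^g)$ suffices. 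The ``subdivided so that it encodes in $O(\log n)$ bits'' step is asserted, not constructed, and you yourself flag it as the delicate part; as written it does not establish $O(1)$ query time.

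The paper avoids both the shift mismatch and the descent by choosing a different unit for the predecessor structures. It keeps one instance $D(u_i)$ per \emph{light} child $u_i$, storing the ranks of $str(w,v_l)$ where $w$ is the parent of $u_i$ and $v_l$ ranges over \emph{all} leaves below $u_i$; the built-in shift is therefore $\mathrm{depth}(w)$. At query time the LCP along $u$'s heavy path stops at a node $u'$, and the light child $u_j$ to branch into is a child of $u'$, so the shift baked into $D(u_j)$ is precisely $\mathrm{depth}(u')$ — exactly the shift implicit in the residual pattern suffix. No mismatch. Moreover, because $D(u_j)$ indexes the entire leaf set under $u_j$ (not just those on $u_j$'s own heavy path), a single predecessor/successor query plus two LCP evaluations and one weighted level ancestor determines the answer location directly; there is no multi-step descent and hence no tabulation is needed or claimed. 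The space bound is the same as yours: each leaf contributes to $O(\log f)$ light ancestors' structures, giving $O(f\log f)\cdot O(\log\log n)=O(f(\log\log n)^2)$ bits. The ingredient you are missing is precisely this ``index per light child with the parent's shift, over the full leaf set'' choice, which makes the query genuinely single-shot.
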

\begin{proof}
Let $\cT_0$ denote the part of the suffix tree induced by suffixes in $S$.
We apply the heavy path decomposition to nodes of $\cT_0$. 
 Let $S(u)$ denote the set that contains all strings $str(w,v_l)$ for the parent $w$ of $u$ and all leaf descendants $v_l$ of $u$. 
We remark that all elements of $S(u)$ are suffixes of $T$. 
The global rank of a suffix $Suf$ is its position in the suffix array of $T$. 
Let $R(u)$ denote the set of global ranks of all suffixes in $S(u)$. 
For every node $u\in \cT_0$ and each of its children $u_i$ that are not on the same heavy path as $u$, we store  a data structure $D(u_i)$. $D(u_i)$ answers predecessor queries on $R(u_i)$. It is not necessary to store the set $R(u)$ itself: 
an arbitrary element of $R(u)$ can be accessed using the  functionality provided by the suffix array. 
Suppose that the global rank of the suffix corresponding to $str(w,v_p)$, where $v_p$ is the $p$-th leaf descendant of $S(u)$, should be computed.   Since we can access the suffix tree, we can find the rank $r_1$ of the suffix that ends in the leaf $v_p$.
Then the suffix corresponding to $str(w,v_p)$ has rank $SA[SA^{-1}[r_1]+depth(w)]$ where $depth(w)$ is the string depth of the 
node $w$ in the global suffix tree. 
By Lemma~\ref{lemma:small}, $D(u_i)$ can be stored in 
$O(|S(u_i)|\log \log n)$ bits and answer predecessor queries 
in $O(1)$ time.  
 The total number of elements in all $D(u)$ is $O(f \log f)=O(f\log\log n)$. Thus all $D(u)$ need  $O(f(\log\log n)^2)$ bits or $o(f)$ words of $\log n$ bits. 
For every heavy path $h_j$ on $\cT_0$ we keep a data structure 
$H_j$ that contains the depths of all nodes. $H_j$ is also implemented as described in Lemma~\ref{lemma:small} and uses 
$O(\log \log n)$ bits per node. 

The search for an LCP in $\cT_0$ is organized in the same way as in ~\cite{ColeGL04}. 
To answer a query $(u,P_j)$, $u\in \cT_0$, we start by finding $l_0=lcp(P_j,SA[r])$, where $r$ is the rank of the suffix  that starts at $u$ and ends in the leaf $v_h$, such that $u$ and $v_h$ are on the same heavy path.  
Let $u'$ denote the lowest node of depth  $d_1\le depth(u)+l_0$ that is on the same heavy path $h_0$ in $\cT_0$ as $u$. 
If $d_1\not=depth(u)+l_0$, then $u'$ is the answer to our query. If $d_1=depth(u)+l_0$ and $u'$ is a leaf, then again $u'$ is the answer to our query. If $d_1=depth(u)+l_0$ and $u'$ is not a leaf, we identify the child $u_j$ of $u'$ that is labelled with $P_j[d_1+1]$. If such a child does not exist, then again $u'$ is the answer.  Otherwise, we find the rank $r'$ of $P'_j=P_j[d_1+1..|P_j|]$.
Using $D(u_j)$, we find the predecessor and the successor of $r'$ in $S(u_j)$. 

Let $S_l$ and $S_r$ denote the corresponding suffixes 
of $D(u_j)$.  We can compute $l_l=lcp(P'_j,S_l)$ and $l_r=lcp(P'_j,S_r)$. Suppose that $l_l\ge l_r$. Let $u_l$ be the node of depth at most $depth(u_j)+l_j$ on the path from $u_j$ to the leaf $l_l$ containing $S_l$.
The node $u_l$, that can be found by answering an appropriate level ancestor query for $l_l$, is the answer to the original  LCP query.  The case when $l_r>l_l$ is handled in the same way.  
\end{proof}

In the following two Lemmas we extend the result of Lemma~\ref{lemma:unrooted2} to the situation  when the data structure is  stored in compressed form. We assume that 
we can  compute $SA[i]$, $SA^{-1}[i]$ for any $i$, $1\le i \le n$, in $O(\tsa)$ time; we also assume that compressed suffix tree with functionality described in Section~\ref{sec:prelim} is available. Only additional bits 
necessary to support queries on $\cT_0$ are counted. 
\begin{lemma}\label{lemma:unrooted3}
There exists a data structure that uses $O(f (\log \log n)^3)$ additional bits of space and answers unrooted LCP queries on $\cT_0$ in $O(\tsa)$ time. 
Our data structure  uses a universal look-up table of size $O(n^g)$ for an arbitrarily small positive constant $g$.
\end{lemma}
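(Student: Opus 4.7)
The plan is to reuse, essentially verbatim, the skeleton of the proof of Lemma~\ref{lemma:unrooted2}: apply the heavy-path decomposition to $\cT_0$; for every off-heavy-path child $u_i$ keep a one-access predecessor structure $D(u_i)$ (Lemma~\ref{lemma:small}) over the global ranks $R(u_i)$; and for every heavy path $h_j$ of $\cT_0$ keep an analogous structure $H_j$ indexed by the string depth of its nodes. Neither global ranks nor string depths are stored explicitly; both are recovered on demand through the compressed suffix tree and the compressed suffix array whenever Lemma~\ref{lemma:small} asks for its single element access. The query procedure itself is unchanged from Lemma~\ref{lemma:unrooted2}.

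Under this substitution each atomic step of the original query slows down from $O(1)$ to $O(\tsa)$, but the algorithmic structure is preserved: a predecessor or successor query in any $D(u_i)$ or $H_j$ triggers one external element access, costing $O(\tsa)$; the global rank of $str(w,v_p)$ reduces to a single $SA^{-1}$ call followed by an $SA$ call shifted by the string depth of $w$, each $O(\tsa)$; and every $lcp$ between a suffix of $P$ and a suffix of $T$ is delivered in $O(\tsa)$ time by Sadakane's machinery after the $O(|P|\tsa)$ preprocessing of Section~\ref{sec:prelim}. Because Lemma~\ref{lemma:unrooted2}'s query invokes only a constant number of such primitives, the total query time is $O(\tsa)$. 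For the space accounting, the predecessor structures themselves contribute $O(f(\log\log n)^2)$ bits exactly as in Lemma~\ref{lemma:unrooted2}; the extra $\log\log n$ factor comes from auxiliary local tags attached to each of the $O(f\log\log n)$ entries, totalling $O(f(\log\log n)^3)$ bits, that replace the constant-time bookkeeping provided for free in the uncompressed setting by the plain arrays $SA$ and $SA^{-1}$. These tags are encoded so that the universal $O(n^g)$-bit table decodes any single tag in $O(1)$ time, which is what preserves the one-access budget.

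The main obstacle is honouring this strict one-access-per-primitive budget so that the $O(\tsa)$ bound propagates to the whole query: every predecessor query, every conversion between a local tag and its corresponding node of $\cT_0$, and every level-ancestor step along a heavy path (used to descend to $u'$ and later to $u_l$) must be resolved with at most a constant number of external lookups to the compressed suffix array. This is precisely the regime in which Lemma~\ref{lemma:small} and the universal look-up table combine to do the real work, and checking that every step of the query of Lemma~\ref{lemma:unrooted2} admits such a compressed implementation is the heart of the proof.
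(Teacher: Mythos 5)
Your proposal follows essentially the same route as the paper: reuse the structures of Lemma~\ref{lemma:unrooted2} (heavy-path decomposition of $\cT_0$, one-access predecessor structures $D(u_i)$ and $H_j$ via Lemma~\ref{lemma:small}), store the local topology of $\cT_0$ in $O(f)$ bits so that $\cT$ need not be kept explicitly, and replace each $O(1)$ access to $SA$, $SA^{-1}$, or a string depth by its $O(\tsa)$-time counterpart in the compressed suffix tree/suffix array; since the query of Lemma~\ref{lemma:unrooted2} invokes only a constant number of these primitives, the $O(\tsa)$ bound follows. The paper's own proof is equally terse on where the third $\log\log n$ factor in the space bound arises; your ``local tag'' accounting is a reasonable reading of it and does not diverge from the paper's argument.
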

\begin{proof}
We use the same data structure as in the proof of Lemma~\ref{lemma:unrooted3}, but $SA[SA^{-1}[r_1]+depth(w)]$ and $depth(u)$ are computed in $O(\tsa)$ time.  It is not necessary to store $\cT$. Information about the heavy path decomposition of $\cT_0$ can be stored in $O(f)$ bits. \shlongver{We show how this can be done in~\cite{arxivver}.}{We will show how this can be done in Appendix~\ref{sec:hpath}.} Data structures $H_i$ need $O(\log\log n)$ bits per node. Since queries on $H_j$ and $D(u)$ are answered in $O(\tsa)$ time, an unrooted LCP 
query is also answered in $O(\tsa)$ time. 
\end{proof}
The following Lemma is  proved in\shlongver{~\cite{arxivver}.}{Appendix~\ref{sec:hpath}.}
\begin{lemma}\label{lemma:unrooted4}
There exists a data structure that uses $O(f)$ additional bits of space and answers unrooted LCP queries on $\cT_0$ in $O((\tsa(\log\log\log n))$ time. 
Our data structure  uses a universal look-up table of size $O(n^g)$ for an arbitrarily small positive constant $g$.
\end{lemma}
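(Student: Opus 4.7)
The plan is to strip away every per-element auxiliary structure from Lemma~\ref{lemma:unrooted3}, keeping only the succinct shape of $\cT_0$ and recomputing depths and ranks on demand. I would encode $\cT_0$'s shape as balanced parentheses augmented with one heavy-child marker bit per node, using $O(f)$ bits and supporting $O(1)$-time parent, child, LCA, subtree size, preorder and leaf-rank operations. Whenever the algorithm of Lemma~\ref{lemma:unrooted2} calls for a string depth of a $\cT_0$-node, or for the global rank of a suffix of the form $str(w,v_l)$, I recompute it in $O(\tsa)$ time from the compressed suffix tree and suffix array via $SA\bigl[SA^{-1}[\mathrm{rank}(v_l)]+\mathrm{depth}(w)\bigr]$. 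This leaves the two predecessor queries at the heart of that algorithm -- the deepest node on a heavy path with string depth at most a given $D$, and the predecessor/successor of a given rank $r'$ inside $R(u_j)$ -- as the only places that still need real data structures.

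Each such predecessor is a query over a sorted sequence of at most $f=O(\log^{3}n)$ values, each fetchable in $O(\tsa)$ time, and it must be answered with $O(1)$ amortised bits per element. I would install a sparse sample of every $(\log\log n)^{2}$-th element, record each sample by its $O(\log\log n)$-bit local position in the sequence, and augment the samples with the structure of Lemma~\ref{lemma:small}, which adds $O(\log\log n)$ bits per sample. The resulting top layer costs $O(|\mathrm{seq}|/\log\log n)$ bits per sequence, which sums to $O(f)$ bits across all heavy paths (since $\sum_{j}|h_j|=O(f)$) and across all $R(u_j)$ (since $\sum_{j}|S(u_j)|=O(f\log\log n)$): the $O(\log\log n)$-bit Lemma~\ref{lemma:small} overhead is paid only once per $(\log\log n)^{2}$-block, which is exactly the amortisation needed to cancel the $\log\log n$ excess in the second sum.

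A query then proceeds in two phases. First, Lemma~\ref{lemma:small} on the top samples returns, in $O(1)$ time and one $O(\tsa)$-time probe of the underlying rank, the unique block of $(\log\log n)^{2}$ consecutive positions that contains the answer. Second, I search inside that block by repeating the same sample-plus-Lemma~\ref{lemma:small} construction one further level (a secondary sample every $\log\log n$ positions inside the block, each recorded as an $O(\log\log\log n)$-bit intra-block offset), locating a sub-block of $\log\log n$ elements in one more $O(\tsa)$-time probe, and finishing with a plain binary search inside that sub-block in $O(\log\log\log n)$ comparisons. Each comparison costs $O(\tsa)$, giving the claimed total query time $O(\tsa\log\log\log n)$.

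The main obstacle is the space bookkeeping for the $R(u_j)$'s: their total size already exceeds the budget by a factor of $\log\log n$, so even one bit per $R(u_j)$-element is too expensive, and the top-level sampling interval $(\log\log n)^{2}$ must be chosen exactly so that the $O(\log\log n)$-bit Lemma~\ref{lemma:small} overhead per sample amortises to $O(1)$ bit per element. The secondary samples inside each top block must be verified to contribute only $o(f)$ bits overall; I plan to achieve this by packing their $O(\log\log\log n)$-bit offsets into machine words of $\log n$ bits, so that the cost of the second level stays asymptotically below the first. Throughout, the sorted order of $R(u_j)$ is never materialised: sample positions are recorded as local leaf indices inside $\cT_0$, and the associated global rank values are recomputed in $O(\tsa)$ time at each probe.
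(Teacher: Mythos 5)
Your overall strategy matches the paper's: keep only an $O(f)$-bit encoding of $\cT_0$'s shape (and heavy-path structure), recompute depths and global ranks on demand via the compressed SA in $O(\tsa)$ time, and replace the full $O(\log\log n)$-bit-per-element predecessor structures $D(u)$ and $H_j$ with a sparse sample so that the Lemma~\ref{lemma:small} overhead is paid only once per $(\log\log n)^2$-block, amortising to $O(1)$ bit per underlying element and leaving a $(\log\log n)^2$-size window to be resolved by $O(\tsa)$-time probes. Up to that point you are reconstructing the paper's proof.

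The gap is in your second sampling layer. Inside a block of $(\log\log n)^2$ positions you propose secondary samples at stride $\log\log n$, recorded as $O(\log^{(3)} n)$-bit offsets (plus Lemma~\ref{lemma:small} machinery). A block of $(\log\log n)^2$ elements therefore has about $\log\log n$ secondary samples of $\Omega(\log^{(3)} n)$ bits each, i.e.\ $\Omega(\log^{(3)} n / \log\log n)$ bits per underlying element; summing over all $R(u_j)$, which together hold $\Theta(f \log\log n)$ elements, gives $\Theta(f \log^{(3)} n)$ bits, which is $\omega(f)$ and breaks the stated budget. Packing the offsets into $\log n$-bit words does not change the total bit count; it only affects access time, so the proposed fix does not close the gap. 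More fundamentally, the second layer is unnecessary: after the top-level sample narrows the search to $(\log\log n)^2$ consecutive positions, a plain binary search on those positions already takes only $O(\log((\log\log n)^{2})) = O(\log^{(3)} n)$ comparisons, each at $O(\tsa)$ cost — exactly the claimed bound — and it needs no stored bits at all, since each probed value is recomputed from the compressed SA. That is what the paper does (with a slightly coarser $\log\log n$ stride for the heavy-path predecessor structure $H_j$, where the budget is looser). If you delete the second layer and replace it with the direct binary search inside the top-level block, your proof becomes correct and coincides with the paper's argument.
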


\no{
Will include these Lemmas later in modified form
\begin{lemma}\label{lemma:unrooted2}
Let $T$, $S$, $\cT$, $n$, and $f$ be defined as in Lemma~\ref{lemma:unrooted1}. Suppose that $f=O(\log n)$.   
There exists a data structure that uses $O(f)$ space and answers unrooted LCP queries 
in $O(\log \log f)$ time. 
\end{lemma}
\begin{proof}
We use an ART decomposition of $\cT$ to divide $\cT$ into a top tree $\cT_0$ that 
contains $f/\log f$ nodes and bottom trees $\cT_i$ $i\ge 1$, that contain $O(\log f)$ nodes each.  Then we apply the heavy path decomposition to nodes of $\cT_0$. We will say that a suffix $S_j$ is \emph{a representative} of a string $str(u,v)$ if $S_j=str(u,v_l)$ where $v_l$ is an arbitrary leaf descendant of $v$. Suppose that a node $u$ is on a heavy path $h_j$ and $u'$ is the lowest node on $h_j$. A suffix $S_k$ is a representative of an internal node $u$ if 
$S_k$ is a representative of $str(u,u')$. 
 Let $S(u)$ denote the set that contains all strings $str(u,v_l)$ for all leaf descendants $v_l$ of $u$. For every node $u\in \cT_0$ and all its children $u_i$ that are not on the same heavy path as $u$, we store  a data structure $D(u_i)$. $D(u_i)$ 
contains representatives of all strings $str(u_i,v)$ for any $v$ that is a leaf in $\cT_0$ and a descendant of $u_i$. 
Every representative is a suffix of $T$; hence, it can be identified by the index of that suffix in the lexicographic order. Since the total number of suffixes in $D(u)$ is
$f=O(\log n)$, we can search in each data structure $D(u)$ in $O(1)$ time using an atomic heap data structure~\cite{FW94}. 
 The total number of elements in all $D(u)$ is $O((f/\log f)\log f)=O(f)$.  
We store a data structure of Lemma~\ref{lemma:unrooted1} for every $\cT_i$, $i\ge1$; using this data structure, we can answer LCP queries in $\cT_i$ in $O(\log(|\cT_i|))$ time.

The search for an LCP in $\cT_0$ is organized in the same way as in ~\cite{ColeGL04}. 
To answer a query $(P_j,u)$, $u\in \cT_0$, we start by finding $l_0=lcp(P_j,r)$, where $r$ is the representative of  $u$. 
Let $u'$ denote the lowest node of depth  $d_1\le depth(u)+l_0$ that is on the same heavy path $h_0$ in $\cT_0$ as $u$. 
If $d_1=depth(u)+l_0$, $u'$ is the answer to our query.   If $d_1<depth(u)+l_0$ and $u'$ is a leaf of $\cT_0$, then we continue the search in one of the subtrees $\cT_i$, 
$i\ge 1$. If $u'$ is not a leaf, we identify the child $u_j$ of $u'$ that is labelled with $P_j[d_1+1]$. If such a child does not exist, then again $u'$ is the answer.  Otherwise, we use the data structure $D(u_j)$; we find the index $ind_j$ of the suffix $P'_j=P_j[d_i+2..|P_j|]$.  
Let $S_l$ and $S_r$ denote the largest and the smallest suffixes in $D(u_j)$ that 
precede and follow $P'_j$. We can find $l_l=lcp(P'_j,S_l)$ and $l_r=lcp(P'_j,S_r)$. Suppose that $l_l\ge l_r$. We can find the heavy path 
$h_l$ of $\cT_0$ such that $h_l$ contains the lowest node of depth $d_l\le depth(u_j)+l_l$ and the path from $u_j$ to $S_l$   intersects with $h_l$. 
Then, we find the lowest node $u_l$ on $h_l$, such that its depth $d_l\le depth(u_j)+l_j$. If $u_l$ is not a leaf of $\cT_0$, then $u_l$ is the answer 
to our query. Otherwise, we continue the search in a subtree $\cT_i$, 
$i\ge 1$. The case when $l_r>l_l$ is handled in the same way.  

We answer an LCP query in $\cT_0$ in $O(1)$ time. A query in $\cT_i$ for $i\ge 1$ takes $O(\log(|\cT_i|))=O(\log\log f)$ time. 
Hence, a query is answered in $O(\log \log f)$ time.
\end{proof}

We can improve the result of Lemma~\ref{lemma:unrooted2} by bootstrapping.
\begin{lemma}\label{lemma:unrooted3}
Let $T$, $S$, $\cT$, $n$, and $f$ be defined as in Lemma~\ref{lemma:unrooted1}. Suppose that $f=O(\log n)$.   
There exists a data structure that uses $O(f)$ space and answers unrooted LCP queries 
in $O(\log^{*}n)$ time. 
There exists a data structure that uses $O(f\log^{(t)} n)$ space for any integer $t>0$ and answers LCP queries in $O(1)$ time.
\end{lemma}
\begin{proof}
The tree $\cT$ is decomposed into a top tree $\cT_0$ and bottom trees $\cT_i$, $i\ge 1$ as in the proof of Lemma~\ref{lemma:unrooted2}. Queries on the top 
tree are also supported as in Lemma~\ref{lemma:unrooted2}.  
Moreover, by Lemma~\ref{lemma:unrooted2}, we can support queries on $\cT_i$ for $i\ge 1$ in $O(\log \log |T_i|)=O(\log^{(3)}n)$ time. Thus we obtain a data structure that  supports queries in $O(\log^{(3)}n)$ time. 

If we plugin a data structure with $O(\log^{(3}n)$ query time in the above construction, then queries on $\cT_i$ are supported in $O(\log^{(3)}|\cT_i|)=O(\log^{(4)}n)$ time. Repeating the same trick $O(\log^{*}n)$ times, we obtain the first result of this Lemma. 

 Repeating the same $t-1$ times, we reduce the original problem to LCP problem on a tree $\cT^t$ with $O(\log^{(t-1)}n)$ suffixes. Our data structure for each 
$\cT^t$ is the same as the data structure for $\cT_0$. As shown above, this 
data structure uses $O(|\cT^t|\log(|\cT^t|))=O(|cT^t|\log^{(t)}n)$ space and answers queries in $O(1)$ time.
Thus our data structure needs $O(f\log^{(t)}n)$ space and answers queries in $O(1)$ time. 
\end{proof}
}

\section{Wildcard Pattern Queries in Less Space}
\label{sec:lessspace}
Now we are ready to describe the compact data structure for wildcard indexing. Our approach is as follows. We divide the suffix tree $\cT$ into subtrees, so that each subtree has a poly-logarithmic number of nodes and results of Section~\ref{sec:smallset} can be applied to each subtree. We also keep a  tree $\cT_m$ that has one representative node for each subtree and stores information about positions of small subtrees in $\cT$. Unrooted LCP queries 
are answered in two steps. First, we identify the small subtree that contains the answer using data structures on $\cT_m$. 
Then we search in the small subtree using the data structure of Section~\ref{sec:smallset}. 
We select the size of subtrees so that $\cT_m$ and data structures for $\cT_m$ use $O(n)$ bits. A detailed description of our data structure is given below. 

{\bf Data Structure.}
Let $\tau=\sigma\log^2 n$. 
We visit all leaves of the suffix tree $\cT$ in left-to-right order and mark every $\tau$-th leaf. We visit all internal nodes of $\cT$ in bottom-to-top order and  mark each node $u$ such that at least two children of $u$ have marked descendants. Finally the root node is also marked.

We divide the nodes of the suffix tree into groups as follows.
Let $u$ be a marked internal node, such that all its non-leaf descendants are unmarked. Each child $u_i$ of $u$ contains at most one marked leaf (because otherwise the subtree rooted at $u_i$ would contain marked internal nodes). The subtrees rooted  at children $u_i, \ldots, u_d$ of $u$ are distributed among groups $G_j(u)$. We select indices $i_1=1$, $i_2$, $\ldots$, $i_t=m$ such that exactly one node among   $u_{i_j}$, $\ldots$, $u_{i_{j+1}-1}$ has a marked 
leaf descendant. For each $j$, $1\le j< t$, all nodes in the subtrees of $u_{i_j},\ldots, u_{i_{j+1}-1}$ are assigned to group 
$G_j(u)$. Every $G_j(u)$ contains $O(\tau)$ nodes. 
Now suppose that a marked node $u$ has marked descendants. 
We divide the children of $u$ into groups $G(u,v)$
 such that 
exactly one child $u_i$ of $u$ in each  $G(u,v)$ has exactly one direct marked descendant. That is, in every $G(u,v)$ there is  exactly one child $u_i$ of $u$ satisfying one of the following two conditions: (i) $u_i$ is marked (in this case $u_i$ is assigned to the group $G(u,u_i)$)
or (ii) $u_i$ has exactly one marked descendant $v$ such that 
there are no other marked nodes between $u_i$ and $v$. 
The group $G(u,v)$ also contains all nodes that are descendants of $u_i$ but are not proper descendants of $v$. To make nodes of 
$G(u,v)$ a subtree, we also include $u$ into $G(u,v)$. 
The number of nodes in $G(u,v)$ is also bounded by $O(\tau)$.

Each node $w\in\cT$ belongs to some group $G_j(u)$ or $G(v,u)$. 
The total number of groups is $O(n/\tau)$ because each group 
can be associated with one marked node. Since every $G_j(u)$ 
is a subtree, we can answer unrooted LCP queries on the nodes 
(and locations) of $G_j(u)$ implemented according to Lemma~\ref{lemma:unrooted3}. 
Furthermore we divide every $G(v,u)$ into two overlapping subgroups: $G_l(v,u)$ contains all nodes of $G(v,u)$ that are 
on the path from $v$ to $u$ or to the left of this path; $G_r(v,u)$ contains all nodes of $G(v,u)$ that are on the path from $v$ to $u$ or to the right of this path. 
We also add the leftmost and rightmost leaf descendants of the 
node $u$, where $u$ is the marked node in $G(v,u)$,  to
$G_l(v,u)$ and $G_r(v,u)$ respectively. The leaves in each group $G_l(v,u)$ and $G_r(v,u)$ correspond to $\tau$ consecutive suffixes. Therefore we can answer unrooted LCP queries on $G_l(u,v)$ and $G_r(u,v)$ using Lemmas~\ref{lemma:unrooted3} or~\ref{lemma:unrooted4}. The answer to an unrooted LCP query on $G(u,v)$ can be obtained 
from answers to the same query on $G_l(u,v)$ and $G_r(u,v)$.  
The data structures for unrooted LCP queries 
on $G_j(u)$, $G_l(u,v)$ and $G_r(u,v)$ will be denoted $D_j(u)$, 
$D_l(u,v)$ and $D_r(u,v)$ respectively. Each node belongs to at most two groups; therefore all group data structures need $O(n)$ bits of space.

The nodes of the suffix tree are stored in compressed form described in Section~\ref{sec:prelim}.  The depth and the string depth of any node can be computed 
in $O(\tsa)$ time. We can also pre-process an arbitrary pattern in $O(|P|\tsa)$ time, so that the LCP of any suffixes $P[j..|P|]$ and $T[i..n]$ can be found in $O(\tsa)$ time.

Moreover, we keep all suffixes that are stored in marked leaves of the suffix tree in a compressed trie $\cT_m$. Nodes of $\cT_m$ correspond to marked nodes of $\cT$. \longver{We keep the data structure 
of Lemma~\ref{lemma:unrootedlarge}  that supports unrooted LCP queries on the nodes of $\cT_m$ in $O(\log \log n)$ time.
This data structure uses $O((n/\tau)\log^2 n)=O(n/\sigma)$ bits.}
\shortver{Unrooted LCP queries on $\cT_m$ can be answered in $O(\log \log n)$ time using $O((n/\tau)\log^2 n)=O(n/\sigma)$ bits; see Lemma 11 in~\cite{arxivver}.} 
 
In every node of $\cT_m$ we store a pointer to the corresponding marked node of $\cT$. We also keep a bit vector  $B$ that keeps data about marked and unmarked nodes of $\cT$; the order of nodes is determined by a pre-order traversal of $\cT$. The $i$-th entry $B[i]$ is set to $1$ if the $i$-th node (in pre-order traversal) is marked, otherwise $B[i]$ is set to $0$.
Using $o(n)$ additional bits, we can compute the number of preceding 1's for any position in $B$ 
in $O(1)$ time~\cite{Munro96}. 
Hence for any node $u\in \cT$, we can find the number of marked nodes that precede $u$ in the pre-order traversal of $\cT$. We also store an array 
$A_m$; the $i$-th entry of $A_m$ contains a pointer to the node 
of $\cT_m$ that corresponds to the $i$-th marked node in $\cT$.
Using $B$ and $A_m$, we can find the node of $\cT_m$ that 
corresponds to a given marked node of $\cT$ in $O(1)$ time. 
We will also need another  data structure to facilitate the navigation between marked nodes and its children. For every marked node $u$ with marked internal descendants and for all groups $G(u,v)$, we store the first character on the label of the edge from $u$ to its leftmost child $u_i\in G(u,v)$ in a predecessor data structure.

{\bf Queries.}
Consider an unrooted LCP query $(u,P)$. If $u$ is marked, we 
find the lowest marked descendant $u'$ of $u$, such that 
$str(u,u')$ is a prefix of $P$. We find the child $u_i$ of $u'$ such that the edge from $u'$ to $u_i$ is labelled with a string 
$s_i$ and $str(u,u')\circ s_i$ is a prefix of $P$. Then we use 
the data structure $D_j(u)$ (respectively $D_l(u,w)$ and $D_r(u,w)$) for the subtree that contains $u_i$ and  answer an unrooted LCP query $(u_i,P')$ for $P'$ satisfying 
$str(u,u')\circ s_i\circ P'=P$. The answer to the latter query provides the answer to the original query $(u,P)$. 
If $u$ is unmarked, we start by  answering the query $(u,P)$ using the data structure for the  group that contains $u$.
If the answer is an unmarked node $u_1$ (or a location $\ou_1$ 
on an edge that starts in an unmarked node), 
then $u_1$ (respectively $\ou_1$) is the answer to our query. If $u_1$ is marked, we answer the query $(u_1,P_1)$, 
where $P_1$ is the remaining suffix of $P$, as described above. Again we obtain the answer to the original query $(u,P)$. 

We can report all occurrences of  $\tP=\phi P_1\phi P_2\ldots \phi P_d$ by answering at most $\sigma^d$ unrooted LCP queries and $\sigma^d$ accesses to the compressed suffix tree. For all alphabet symbols $a$ we find the location of the pattern $aP_1$ by answering a wildcard LCP query. 
For each symbol $a$, such that the location $\ou_a$ of $aP$ in $\cT$ was found, we continue as follows. 
If $\ou_a$ is a position on an edge $(u_a,u_a')$, we check whether the remaining part of the edge label equals $aP_2'$ for some symbol $a$ and a prefix $P_2'$ of $P_2$. If this is the case, we answer a query $(u_a',P_2'')$ where $P_2''$ satisfies $P_2=P_2'\circ P_2''$. If $\ou_a$ is a node, we find the loci of patterns $str(\ou_a)\circ xP_2$, where $x$ denotes any alphabet symbol, as described above. We proceed in the same way until the loci of all $x_1P_1\ldots x_mP_m$ for any alphabet symbol $x_i$ are found. This approach can be straightforwardly extended to  reporting occurrences of a general wildcard expression $\tP=\phi ^{k_1}P_1\phi ^{k_2}P_2\ldots \phi ^{k_d}P_d$, where $\phi ^{k_i}$ denotes an arbitrary sequence of $k_i$ alphabet symbols and
$k_i\ge 0$ for $1\le i\le d$. 
\begin{theorem}
\label{theor:comp1}
There exists an $O(n+s_{small}n)$-bit data structure that reports all $\occ$ occurrences of a wildcard pattern  $\phi ^{k_1}P_1\phi ^{k_2}P_2\ldots \phi ^{k_d}P_d$ in $O(\sum_{i=1}^d|P_i|\tsa +\sigma^g t_{small}(n) +\occ\cdot\tsa)$ time, where  $g=\sum_{i=1}^mk_i$; $s_{small}$ and $t_{small}$ denote the average space usage and query time of the data structures described in Lemmas~\ref{lemma:unrooted2} or ~\ref{lemma:unrooted3}. 
\end{theorem}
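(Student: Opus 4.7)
The plan is to verify the two claims (space and time) separately, by accounting for all components of the construction that precedes the theorem.

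For the space bound, I would first observe that the compressed suffix tree plus the compressed suffix array take $O(n)$ bits by Lemma~\ref{lemma:csa} together with the suffix tree encodings cited in Section~\ref{sec:prelim}. Each node of $\cT$ lies in at most two of the subtrees $G_j(u)$, $G_l(u,v)$, $G_r(u,v)$, so summing the per-group bound from Lemma~\ref{lemma:unrooted2} or Lemma~\ref{lemma:unrooted3} gives $O(s_{small}\cdot n)$ additional bits. The top-level tree $\cT_m$ has $O(n/\tau)=O(n/(\sigma\log^2 n))$ nodes; the unrooted-LCP structure on $\cT_m$ requires $O((n/\tau)\log^2 n)=O(n/\sigma)$ bits, and the auxiliary arrays $A_m$ and the predecessor structure for navigating marked children contribute $O((n/\tau)\log n)=o(n)$ bits. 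The bit vector $B$ with rank support uses $n+o(n)$ bits. Summing gives the claimed $O(n+s_{small}\cdot n)$-bit total.

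For the time bound I would split the work into three phases. \textbf{Preprocessing:} Using McCreight's technique as described in Section~\ref{sec:prelim}, we process the concatenation $P_1\circ P_2\circ\ldots\circ P_d$ in $O\bigl((\sum_i |P_i|)\tsa\bigr)$ time so that, for every suffix of every $P_i$, its rank in $SA$ and its LCP with any $T$-suffix can be produced in $O(\tsa)$ time. \textbf{Locus enumeration:} We run the tree traversal sketched just before the theorem. For each of the $\sigma^g$ instantiations of the $g$ wildcard positions we must, segment by segment, find the location reached in $\cT$ after reading the current segment. At each step the current location lies either inside some small subtree, in which case the next unrooted-LCP query costs $O(t_{small}(n))$ by Lemma~\ref{lemma:unrooted2} or~\ref{lemma:unrooted3}, or at a marked node, in which case we first chase the prefix through $\cT_m$ and then drop into the appropriate small subtree; queries on $\cT_m$ are subsumed by $O(t_{small}(n))$ since $\cT_m$ has $O(n/\tau)$ nodes. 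Navigation between marked and unmarked nodes via $B$, $A_m$, and the predecessor structure on edge-first-characters takes $O(1)$ per step. Bounding the total number of unrooted-LCP queries plus $\cT_m$ descents by $O(\sigma^g)$, following the same charging argument as in the algorithm description, yields $O(\sigma^g\, t_{small}(n))$ time for this phase. \textbf{Reporting:} Each of the $\occ$ occurrences is read out from a leaf in the resulting subtree of $\cT$ via $SA[\cdot]$, contributing $O(\occ\cdot\tsa)$.

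The main obstacle is the amortization in the locus-enumeration phase. A naive count could be $\sigma^g\cdot d$ unrooted-LCP queries, one per segment per instantiation, which would worsen the bound. The argument I would give is that the traversal is organised as a DFS over the implicit trie of instantiations: we process a prefix $x_1 P_1 \phi^{k_2}\cdots \phi^{k_i} x_i P_i$ once, and only branch at each wildcard over at most $\sigma$ children of the current suffix-tree location. Branching at a wildcard of $\phi^{k_j}$ is implemented by $\sigma$ accesses to the compressed suffix tree, not by additional LCP queries, so each full instantiation contributes $O(1)$ amortised unrooted-LCP queries against a potential function counting remaining segments. This bounds the total LCP work by $O(\sigma^g)$ calls, each at $O(t_{small}(n))$ cost, as claimed. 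The theorem then follows from summing the three phases.
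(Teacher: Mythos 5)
Your overall structure and conclusion match the paper's implicit proof (the paper does not provide an explicit proof of Theorem~\ref{theor:comp1}; the construction and query procedure of Section~\ref{sec:lessspace} \emph{are} the proof). The space accounting is correct, and the three-phase time analysis is the right decomposition. Two points in the time analysis, however, deserve attention.

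First, your justification for why the $\cT_m$ component is absorbed is wrong as stated. You write that queries on $\cT_m$ are ``subsumed by $O(t_{small}(n))$ since $\cT_m$ has $O(n/\tau)$ nodes.'' But $\cT_m$ having $O(n/\tau)=O(n/(\sigma\log^2 n))$ nodes does not make its unrooted LCP queries cheap: the data structure of Lemma~\ref{lemma:unrootedlarge} answers them in $O\bigl(\min(\log\log n,\sqrt{\log f/\log\log n})\bigr)$ time, and since $f=\Theta(n/\log^2 n)$ here, $\log f=\Theta(\log n)$, so this is $\Theta(\log\log n)$ regardless of the factor $\tau$ shaved off the size. The reason the $\cT_m$ cost disappears in the final bound is different: the theorem is only instantiated with $t_{small}$ coming from Lemma~\ref{lemma:unrooted3}, where $t_{small}=O(\tsa)$, and all three suffix-array trade-offs used downstream (Lemma~\ref{lemma:csa}(a),(b) and the $O(n\log^{\eps}n)$-bit variant) have $\tsa=\Omega(\log\log n)$. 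With that, each unrooted LCP query costs $O(\log\log n+t_{small})=O(t_{small})$. If one tried to invoke the theorem with $\tsa=O(1)$ and Lemma~\ref{lemma:unrooted2}'s $t_{small}=O(1)$, the $\cT_m$ descents would not be hidden, and the bound as written would fail; you should state this hypothesis rather than leave a false intermediate claim.

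Second, the amortization discussion is unnecessary baggage and partly misdescribes the algorithm. The worry about $\sigma^g\cdot d$ queries does not arise: the traversal explores a tree in which there are at most $\sigma^i$ locations live after instantiating the first $i$ of the $g$ wildcard positions, and each such location triggers exactly one unrooted LCP query (for the segment that follows) plus $O(\sigma)$ constant-time suffix-tree edge accesses for the next branching. Summing the geometric series $\sum_{i\ge 1}\sigma^i=O(\sigma^g)$ (for $\sigma\ge 2$) directly gives the $O(\sigma^g)$ bound on LCP queries and on CST accesses; no potential function is needed, and it is not true that branching avoids LCP queries -- each child of a branch point costs one unrooted LCP query, as the paper states (``at most $\sigma^d$ unrooted LCP queries and $\sigma^d$ accesses to the compressed suffix tree'').
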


Two interesting corollaries of this result are the following indexes. We use the same notation as in Theorem~\ref{theor:comp1}. If we combine Lemma~\ref{lemma:csa}, (a) with Lemma~\ref{lemma:unrooted4} we get $t_{small}=O(\log^{\eps}n)$ and $s_{small}=O(1)$ (the query time $O(\log^{\eps}n\log^{(3)}n)$ can be simplified to $O(\log^{\eps}n)$ by replacing $\eps$ with some $\eps'<\eps$). If we plug in this result into Theorem~\ref{theor:comp1}, we obtain our first main data structure.
\begin{corollary}
  \label{cor:compconst1}
There exists an $O(n)$-bit data structure that answers wildcard 
pattern matching queries in $O((\sum_{i=1}^d|P_i| +\sigma^g+\occ)\log^{\eps}n)$  time. 
\end{corollary}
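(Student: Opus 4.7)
The plan is to directly instantiate the generic bound of Theorem~\ref{theor:comp1} with the cheapest available compressed suffix array (Lemma~\ref{lemma:csa}(a)) and the most space-efficient small-subtree unrooted-LCP structure (Lemma~\ref{lemma:unrooted4}). Concretely, I would take the underlying compressed suffix array/tree from Lemma~\ref{lemma:csa}(a), which uses $O(n)$ bits and supports $SA[i]$, $SA^{-1}[i]$, and node-depth queries in time $\tsa = O(\log^{\eps}n)$. On each small subtree produced by the decomposition of Section~\ref{sec:lessspace}, I would install the structure of Lemma~\ref{lemma:unrooted4}, giving $s_{small}=O(1)$ bits per suffix and $t_{small}=O(\tsa \cdot \log^{(3)}n) = O(\log^{\eps}n \cdot \log^{(3)}n)$.

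Next, I would substitute these parameters into Theorem~\ref{theor:comp1}. The space cost becomes $O(n + s_{small}\cdot n) = O(n)$ bits, which already includes both the compressed suffix array and the global scaffolding ($\cT_m$, the bit vector $B$, the pointer array $A_m$, and the per-group structures) shown in Section~\ref{sec:lessspace} to fit in $O(n)$ bits. The query time, plugged into the formula of Theorem~\ref{theor:comp1}, is
\[
O\!\left(\sum_{i=1}^d |P_i|\cdot \log^{\eps}n + \sigma^g\cdot \log^{\eps}n\cdot \log^{(3)}n + \occ\cdot \log^{\eps}n\right).
\]

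The only nontrivial step is the cosmetic cleanup of the $\log^{(3)}n$ factor on the $\sigma^g$ term. To remove it I would rename the constant: start the construction with some $\eps' < \eps$ in Lemma~\ref{lemma:csa}(a) and Lemma~\ref{lemma:unrooted4} instead of $\eps$. Since $\log^{(3)}n = o(\log^{\delta}n)$ for every $\delta>0$, choosing any $\eps'$ strictly between $0$ and $\eps$ gives $\log^{\eps'}n \cdot \log^{(3)}n = O(\log^{\eps}n)$, so the middle summand collapses into the same $\log^{\eps}n$ factor as the other two. This yields the claimed bound
\[
O\!\left(\Bigl(\sum_{i=1}^d |P_i| + \sigma^g + \occ\Bigr)\log^{\eps}n\right),
\]
and the total space stays at $O(n)$ bits because $\eps'$ is still a positive constant, so $s(n)=O((1/\eps')n)=O(n)$ in Lemma~\ref{lemma:csa}(a).

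There is no real obstacle; the corollary is a combination step. The only place one has to be careful is to verify that the $O(n)$-bit budget of Theorem~\ref{theor:comp1} absorbs simultaneously the compressed suffix array of Lemma~\ref{lemma:csa}(a), the $O(1)$-bits-per-suffix structures of Lemma~\ref{lemma:unrooted4}, and the auxiliary marked-node machinery of Section~\ref{sec:lessspace}, and that in the final time bound the $\log^{(3)}n$ factor can indeed be hidden inside the slightly larger $\log^{\eps}n$ factor via the $\eps \mapsto \eps'$ trick above.
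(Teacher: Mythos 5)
Your proposal is correct and follows essentially the same route as the paper: instantiate Theorem~\ref{theor:comp1} with the compressed suffix array of Lemma~\ref{lemma:csa}(a) ($\tsa=O(\log^{\eps}n)$, $O(n)$ bits) and the $O(f)$-bit small-subtree structure of Lemma~\ref{lemma:unrooted4}, then absorb the extra $\log^{(3)}n$ factor by starting with $\eps'<\eps$. The paper states this exact combination in the paragraph preceding the corollary, including the same $\eps\mapsto\eps'$ renaming trick.
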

We remark that the result of Corollary~\ref{cor:compconst1} can be also extended to the case of an arbitrarily large alphabet. 
In this case the index uses $O(n\log \sigma)$ bits and queries are answered in $(\sum_{i=1}^d|P_i| +\sigma^g+\occ)\log_{\sigma}^{\eps}n)$ time. This variant can be obtained by using the suffix array of Grossi \etal~\cite{GrossiV05}; the compressed suffix tree uses $O(n\log\sigma)$ bits in this case.

If we combine Lemma~\ref{lemma:csa}, (b) with Lemma~\ref{lemma:unrooted4} and plug in the result into Theorem~\ref{theor:comp1}, we obtain our second main data structure.
\begin{corollary}
  \label{cor:compconst2}
There exists an $O(n(\log\log n)^2)$-bit data structure that answers wildcard pattern matching queries in $O((\sum_{i=1}^d|P_i|+\sigma^g + \occ)\log\log n)$ time. 
\end{corollary}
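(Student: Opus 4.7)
The plan is to plug concrete compressed-index ingredients into the general index of Theorem~\ref{theor:comp1}. For the underlying compressed suffix array (and the compressed suffix tree built on it, as described in Section~\ref{sec:prelim}), I would use Lemma~\ref{lemma:csa}(b), giving $\tsa = O(\log\log n)$ at a cost of $O(n\log\log n)$ bits. For the per-group unrooted-LCP structures installed on the subtrees $G_j(u)$, $G_l(v,u)$, $G_r(v,u)$ defined in Section~\ref{sec:lessspace}, I would use Lemma~\ref{lemma:unrooted3}: each group contains $f = O(\sigma\log^2 n)$ suffixes, which is poly-logarithmic for constant $\sigma$, so the lemma applies and yields $t_{small} = O(\tsa) = O(\log\log n)$ per query together with $s_{small} = O((\log\log n)^2)$ bits per suffix.

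Summing the contributions, the total space is $O(n\log\log n)$ bits for the compressed SA and suffix tree, $O(n/\sigma) \subseteq O(n)$ bits for $\cT_m$ together with its auxiliary bit-vector $B$ and array $A_m$, and $O(s_{small}\cdot n) = O(n(\log\log n)^2)$ bits for the collection of group structures, since every node of $\cT$ belongs to at most two groups. The last term dominates and gives the claimed $O(n(\log\log n)^2)$-bit bound. For the query time, Theorem~\ref{theor:comp1} gives $O(\sum_i|P_i|\,\tsa + \sigma^g\, t_{small}(n) + \occ\cdot\tsa)$, and substituting $\tsa = t_{small} = O(\log\log n)$ collapses this to $O((\sum_i|P_i| + \sigma^g + \occ)\log\log n)$, matching the corollary.

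The step that I expect to require the most care is verifying the per-suffix space charge of $O((\log\log n)^2)$ bits inside each group structure. This hinges on the heavy-path decomposition of $\cT_0$ used in the proof of Lemma~\ref{lemma:unrooted3}: a given suffix is a leaf descendant of only $O(\log f) = O(\log\log n)$ off-path children encountered along its root-to-leaf path, and each such child contributes one element of $O(\log\log n)$ bits to the corresponding predecessor structure $D(u_i)$ by Lemma~\ref{lemma:small}; the heavy-path depth arrays $H_j$ contribute another $O(\log\log n)$ bits per node. Once this charging is in place, the remainder is a direct substitution of the chosen parameters into Theorem~\ref{theor:comp1}, and the query routine itself is exactly the one described after Theorem~\ref{theor:comp1} for the pattern $\tP=\phi^{k_1}P_1\phi^{k_2}P_2\ldots\phi^{k_d}P_d$.
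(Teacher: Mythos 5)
Your proof is correct, and it is what the derivation of Corollary~\ref{cor:compconst2} actually has to be; indeed it is more careful than the one-line proof sketch in the paper. The paper states that the corollary follows by combining Lemma~\ref{lemma:csa}(b) with Lemma~\ref{lemma:unrooted4}; but that combination yields $t_{\mathrm{small}} = O(\tsa\cdot\log\log\log n) = O(\log\log n\cdot\log\log\log n)$, and, unlike the $\log^{\eps}n$ regime of Corollary~\ref{cor:compconst1}, this extra $\log\log\log n$ factor cannot be absorbed into an $O(\log\log n)$ query bound (nor would Lemma~\ref{lemma:unrooted4}'s $O(f)$-bit groups account for the $O(n(\log\log n)^2)$-bit space claim). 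Your choice of Lemma~\ref{lemma:unrooted3}, which gives $t_{\mathrm{small}}=O(\tsa)=O(\log\log n)$, is the one that actually produces both the claimed space and the claimed time; the paper's reference to Lemma~\ref{lemma:unrooted4} for Corollary~\ref{cor:compconst2} appears to be a slip.

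One more discrepancy worth flagging: the statement of Lemma~\ref{lemma:unrooted3} gives $O(f(\log\log n)^3)$ additional bits, whereas you charge $O(f(\log\log n)^2)$. Your figure is the one supported by the underlying proof (Lemma~\ref{lemma:unrooted3}'s proof reuses the structure of Lemma~\ref{lemma:unrooted2}): across all $D(u)$ predecessor structures there are $O(f\log f)=O(f\log\log n)$ elements since a root-to-leaf path in $\cT_0$ crosses $O(\log f)=O(\log\log n)$ heavy paths, each element costs $O(\log\log n)$ bits by Lemma~\ref{lemma:small}, the $H_j$ structures contribute $O(\log\log n)$ bits per node, and the heavy-path topology adds only $O(f)$ bits --- $O(f(\log\log n)^2)$ total. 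This per-suffix charge of $O((\log\log n)^2)$ bits is exactly what the corollary's $O(n(\log\log n)^2)$-bit bound requires, so the $(\log\log n)^3$ in the stated Lemma~\ref{lemma:unrooted3} looks like a typo, and your re-derivation of the correct exponent is the right way to close the gap.
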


\section{LCP  Queries for Patterns with Wildcards, $\sigma=\log \log n$}
\label{sec:case0}
In the remaining part of this paper we describe  faster solutions that use  linear or sublinear space. In sections~\ref{sec:case0} and~\ref{sec:case1} we describe an $O(n\log n)$-bit data structure for $\sigma\ge\log \log n$. 
In section~\ref{sec:case2} we use a more technically involved variant of the same approach to obtain fast solutions for $\sigma<\log\log n$.
 
In this section we will show how to answer a batch of LCP queries  called wildcard LCP queries.
A wildcard LCP query $(u,\phi P)$ returns the loci of  $str(u)\circ aP$ in the suffix tree of a source text $T$ 
for all $a\in \Sigma$  such that $str(u)\circ aP$ occurs in $T$. As before, we assume that we can preprocess some pattern 
$\overline{P}$ in $O(\overline{P})$ time; then, queries $(u,P)$ where $P$ is a suffix of $\overline{P}$ are answered. The pre-processing is the same as in Section~\ref{sec:smallset}.

A leaf descendant $v_l$ of a node $u$ is a light descendant of $u$ if $v_l$ and $u$ are not on the same heavy path.
A wildcard tree $\cT_u$ for a node $u$ is a compressed trie that contains all strings $s$ satisfying
$a\circ s=str(u,v_l)$ for some symbol $a$ and some light leaf descendant $v_l$ of $u$. 
The main idea of our approach is to augment the suffix tree $\cT$ with wilcard 
trees in order to accelerate the search. To avoid logarithmic  increase in space 
usage, only selected nodes of wilcard trees will be stored. We explain our method for the  case $\sigma=\log\log n$. 

Let $\tau=\sigma\log^2 n$. We mark the nodes of the suffix tree in the same way as described in Section~\ref{sec:lessspace}. Every 
$\tau$-th leaf of $\cT$, each internal node with at least two children that have marked descendants, and the root of $\cT$ are marked. The nodes of $\cT$ will be called the \emph{alphabet nodes}. We also store selected nodes from wildcard trees, further called \emph{wildcard nodes}.   A truncated wildcard tree $\cT_u$ is a compressed trie containing all strings $s$, such that $a\circ s=str(u,v_l)$ for some marked  light leaf descendant  $v_l$ of $u$.  Each leaf-to-root path intersects  $O(\log n)$ heavy paths. Therefore each marked leaf occurs in $O(\log n)$ truncated wildcard trees. Hence 
the total number of wildcard nodes  is $O((n/\tau)\log n)$. Every node in each truncated wildcard tree contains pointers to some alphabet nodes or locations on edges between alphabet nodes. Suppose that  a node $v$ is in a wildcard subtree $\cT_w$, the parent of $\cT_w$ is some node $w$, and the label of $v$ in $\cT_w$ is $s$. 
For every symbol $a$ such that $s_a=str(w)\circ a\circ s$ occurs in the source text, we store a pointer from $u$ to the location $u_a$ of $s_a$. 
The total number of pointers is equal to $O(n\log n (\sigma/\tau))$. 
We distribute alphabet nodes into groups $G_j(u)$ and $G(v,u)$ 
as described in Section~\ref{sec:lessspace}; data structures $D_j(u)$, $D_l(v,u)$, and $D_r(v,u)$ are also defined in the same way as in Section~\ref{sec:lessspace}. Every pointer from a wildcard node to an alphabet node $w$ (or edge $(u,w)$) contains a reference to the group that contains $w$. \longver{Moreover, both alphabet and wildcard nodes of our extended suffix tree are kept in the data structure of Lemma~\ref{lemma:unrootedlarge} that   answers unrooted LCP queries in $O(\log\log n)$ time.}
\shortver{Moreover, both alphabet and wildcard nodes of our extended suffix tree are kept in the data structure, described in  Lemma 11 in~\cite{arxivver}, that   answers unrooted LCP queries in $O(\log\log n)$ time.}

{\bf Queries.}
Suppose that a wildcard LCP query $(u,\phi P)$ must be answered. Let $a_h$ be the first symbol in $str(u,u_h)$, where $u_h$ is the child of $u$ that is on the same heavy path. We answer a query $a_h\circ P$ in $O(\log \log n)$ time using the result of~\cite{BilleGVV12}. Next, we must find the locus nodes of all patterns $a_j\circ P$, $a_j\not= a_h$. 
We answer an LCP query $P$ in the truncated wildcard tree $\cT_u$ of the node $u$. Let $w$ denote the node where the search  for $P$ 
in $\cT_u$ ends and let $w_r$ denote the root node of $\cT_u$.  The node $w$ can also be found in $O(\log \log n)$ time.  
\begin{description}
\item[1.]
Suppose that $str(w_r,w)=P$. 
We follow pointers from $w$ to alphabet nodes $w_1$, $\ldots$, $w_{\sigma}$ marked with alphabet symbols $a_1$,$\ldots$, $a_{\sigma}$. For each  $1\le j\le \sigma$
we find the group $G_r(u_j)$ (or $G(u_j,v_j)$) that 
  contains $w_j$ and answer an LCP query $(w_j,P_j)$ on the tree induced by $G(u_j)$ (respectively $G(u_j,v_j)$). The string $P_j$ is a suffix of $P$ that satisfies $str(u,u_j)\circ P_j=a_j\circ P$. Using information in the pointer from $w$ to $u_j$, we can find $P_j$ in $O(1)$ time. 
\item[2.]
The pattern $P$ can be also located between two nodes $w'$ and $w$ of $\cT_u$ such that $str(w_r,w')$ is prefix of $P$ and $P$ is a prefix of $str(w_r,w)$. For every $j$, we follow the pointers marked with alphabet symbol $a_j$. Suppose that pointers from $w'$ and $w$ lead to locations $\ow'_j$ and $\ow_j$ respectively.
Let $w'_j$ be the lower node on the edge of $\ow'_j$ and let 
$w_j$ be the upper node on the edge of $\ow_j$. 
There are no marked nodes between $w'_j$ and $w_j$. Therefore we only need to search in the group that contains $w_j$ to complete the LCP query. 
\end{description}
The total search time is $O(\log\log n +\sigma\cdot t_{\text{small}})$ where $t_{\text{small}}$ is the time needed to answer an LCP query on a subtree of $\tau$ nodes. We  use Lemma~\ref{lemma:unrooted2}; hence $t_{\text{small}}=O(1)$. Since $\sigma=\log\log n$, a wildcard LCP query is answered in $O(\log \log n)=O(\sigma)$ time.

\section{ Wildcard Pattern Matching Queries for  $\sigma\ge \log\log n$}
\label{sec:case1}
{\bf Wildcard LCP Queries.}
We can modify the data structure of Section~\ref{sec:case0} for the case when the alphabet size $\sigma\ge \log \log n$. We divide the alphabet $\Sigma$ into groups such that every group, except the last one, contains $\log \log n$ elements. The last group contains at most $\log \log n$ elements. 
We will denote these groups $\Sigma^1$, $\ldots$, $\Sigma^g$ for $g=\lceil\,\sigma/\log \log n\,\rceil$. Instead of one wildcard tree $\cT_u$, we will store 
$g$ modified wildcard trees $\cT^1_u,\ldots,\cT^g_u$ in every node $u\in \cT$. A wildcard tree $\cT^i_u$ for a node $u$ is a compressed trie that contains all strings $s$ satisfying $a\circ s=str(u,v_l)$ for some symbol $a\in \Sigma^i$ and some marked light leaf descendant $v_l$ of $u$.  We keep the same data structure for every 
$\cT^i_u$ as in Section~\ref{sec:case0}. Thus we answer LCP queries for each group of $\log\log n$ alphabet symbols in 
$O(\log \log n)$ time.
The total time needed to answer a wildcard LCP query is $O(\lceil\,\sigma/\log\log n\,\rceil\log \log n)=O(\sigma)$.

{\bf Indexing.}
Consider a query $\tP=\phi P_1\phi P_2\ldots \phi P_d$. If $\sigma\ge \log\log n$, then our data structure for wildcard LCP queries enables us to find all occurrences of $\tP$ by answering wildcard LCP queries. We find the loci of all $a_iP_1$ for every $a_iP_1$ 
that occurs in the source text  $T$. This is achieved by answering a wildcard LCP query $(u_r,\phi P_1)$. 
For every found location $u_i^1$ we proceed as follows. If $u_i^1$ is in a middle of an edge $e$, we move one symbol down and then 
check whether the remaining symbols of an $e$ are labelled with a prefix of $P_2$. If this is the case and the remaining part of $e$ 
is labelled with $P'_2$, we answer a regular LCP query $(w_i^1,P''_2)$ such that $w_i^1$ is the node at the lower end of $e$ and $P_2=P_2'\circ P_2''$. Using the data 
structure of Bille~\etal~\cite{BilleGVV12}, an LCP query can be answered in $O(\log \log n)$ time. 
If $u_i^1$ is a node in the suffix tree, then we answer a wildcard LCP query $(u_i^1,\phi P_2)$. We continue in the same manner until
 the loci of all $xP_1\ldots xP_m$, where $x$ denotes an arbitrary symbol in $\Sigma$, are found.  A general wildcard 
pattern $\phi^{k_1}P_1\ldots\phi^{k_d}P_d$ is processed in the same 
way. 

Since the maximum number of wildcard LCP queries and standard LCP queries does not exceed $\sigma^{g}$, the total query time is $O(\sigma^g)$.
Preprocessing stage for all wildcard LCP queries takes $O(\Sigma_{i=1}^d |P_i|)$ time.  
\begin{lemma}\label{lemma:case1}
  Suppose that the alphabet size $\sigma\ge \log\log n$. Using an $O(n\log n)$-bit data structure, we can report all occurrences of a pattern $\tP=\phi^{k_1} P_1\phi^{k_2} P_2\ldots \phi^{k_d} P_d$ in $O(\sum_{i=1}^d|P_i|+\sigma^g + \occ)$ time, where $\occ$ is the number of times $\tP$ occurs in the text and 
$g=\sum_{i=1}^d k_i$.
\end{lemma}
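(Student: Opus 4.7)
The plan is to assemble the index from two pieces that have already been built: the alphabet-partitioned wildcard LCP structure described in the first half of Section~\ref{sec:case1} and the standard (wildcard-free) LCP index of Bille~\etal~\cite{BilleGVV12}. Before writing the query algorithm I would verify that both pieces fit in $O(n\log n)$ bits: with $\tau=\sigma\log^2 n$, each truncated wildcard tree $\cT^i_u$ contains only the marked light leaf descendants of $u$, so every marked leaf appears in $O(\log n)$ wildcard trees and contributes to at most $O(\lceil\sigma/\log\log n\rceil)$ groups; a straightforward accounting yields $O(n\log n)$ bits for pointers plus the $O(n)$-bit overhead of the small-subtree structures $D_j(u)$, $D_l(v,u)$, $D_r(v,u)$. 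The Bille~\etal~LCP index occupies $O(n\log n)$ bits as well.

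To process $\tP=\phi^{k_1}P_1\cdots\phi^{k_d}P_d$, I would first preprocess $\tP$ in $O(\sum_i |P_i|)$ time in the manner recalled in Section~\ref{sec:prelim}, so that the LCP of any suffix of $\tP$ with any suffix of $T$ is available in constant time. Then I would carry out a depth-first traversal of the implicit \emph{match tree}: its root is the root of $\cT$; at a node whose next symbol to consume is a wildcard, one wildcard LCP query returns in $O(\sigma)$ time the (at most $\sigma$) loci of $str(\cdot)\circ a$ for every $a\in\Sigma$ that admits a continuation, producing one child per such $a$; at a node whose next segment is a literal $P_i$ (or the suffix of $P_i$ that remains after walking along a partial suffix-tree edge), a single standard LCP query of Bille~\etal~either lands at a new locus or prunes this branch. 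After the last wildcard and the last literal are handled, every surviving locus is the root of a subtree of $\cT$ whose leaves are occurrences of $\tP$, and these are emitted in $O(\occ)$ time using the suffix array.

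For the time bound, the match tree has at most $\sigma^g$ leaves and $O(\sigma^{g-1})$ internal branching nodes. Each wildcard LCP query costs $O(\sigma)$ and produces $\sigma$ children, so it pays $O(1)$ amortized per spawned child and $O(\sigma^g)$ in aggregate. Each standard LCP query costs $O(\log\log n)$, which is $O(\sigma)$ under the hypothesis $\sigma\geq\log\log n$, and because literal segments do not branch one such query can be charged to a single branch, summing again to $O(\sigma^g)$. Adding $O(\sum_i|P_i|)$ for the one-time preprocessing of $\tP$ and $O(\occ)$ for reporting gives the claimed $O(\sum_i|P_i|+\sigma^g+\occ)$ bound.

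The main obstacle I anticipate is making the amortization tight in two places. First, I must justify that consuming a literal segment $P_i$ of length $\ell_i$ really costs only $O(1)$ queries per branch (not $O(\ell_i)$), so that all of the $\sum_i|P_i|$ work is absorbed by the one-time preprocessing; this hinges on the fact that after preprocessing $\tP$, an LCP query between the fixed suffix $P_i\phi^{k_{i+1}}P_{i+1}\cdots$ of $\tP$ and any suffix of $T$ is a $O(\log\log n)$ operation in the Bille~\etal~index, not an $O(|P_i|)$ traversal. Second, the pointer bookkeeping at each wildcard node must deliver the landing alphabet node directly into the correct small-subtree structure $D_j(u)$, $D_l(v,u)$ or $D_r(v,u)$, so that the wildcard LCP query really terminates in $O(\sigma)$; this is exactly the mechanism established in Section~\ref{sec:case0} and extended to $\sigma\geq\log\log n$ via alphabet partitioning, and I would verify that the extension preserves the $O(\sigma)$ per-query cost.
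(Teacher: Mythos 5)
Your plan assembles the right pieces (the alphabet-partitioned wildcard LCP structure plus the Bille~\etal\ index for plain LCP queries), and the space accounting is fine, but the query-time analysis in your second paragraph misstates what a wildcard LCP query does, and that breaks the $O(\sigma^g)$ bound. You write that a wildcard LCP query at a node ``returns in $O(\sigma)$ time the loci of $str(\cdot)\circ a$ for every $a\in\Sigma$,'' i.e.\ extends past the wildcard by one symbol, and then each of the up to $\sigma$ resulting children separately matches its literal segment $P_i$ with one Bille~\etal\ query. That has a wildcard node spawn $\sigma$ children, \emph{each} of which pays $O(\log\log n)$ to consume $P_i$. There are $\Theta(\sum_{i<g}\sigma^i)=\Theta(\sigma^{g-1})$ wildcard nodes, hence $\Theta(\sigma^g)$ literal-matching steps, each costing $O(\log\log n)$, for a total of $O(\sigma^g\log\log n)$; under the hypothesis $\sigma\ge\log\log n$ this is only bounded by $O(\sigma^{g+1})$, not $O(\sigma^g)$. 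Your amortization claim that each literal query ``can be charged to a single branch, summing again to $O(\sigma^g)$'' undercounts these steps by a factor of $\sigma$.

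The fix is precisely the thing the wildcard trees of Section~\ref{sec:case0} buy you, so this is a gap in the argument as written even though you cite the right ingredients in your last paragraph. The wildcard LCP query $(u,\phi P)$ is defined to return the loci of $str(u)\circ aP$ for all $a$, absorbing both the wildcard \emph{and} the following literal in one $O(\sigma)$ call: a single $O(\log\log n)$ search for $P$ in the truncated wildcard tree $\cT_u$ produces \emph{shared} information about all $\sigma$ branches, and then each branch is finished in $O(1)$ by following a pointer into a small-group structure $D_j(u)$, $D_l(v,u)$ or $D_r(v,u)$ and invoking Lemma~\ref{lemma:unrooted2}. Because the literal search in $\cT_u$ is shared across the $\sigma$ branches rather than repeated per branch, the amortized literal-matching cost at a wildcard node is $O(1)$ per child instead of $O(\log\log n)$. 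Bille~\etal\ queries are then invoked only at edge-middle loci, of which there are $O(\sigma^{g-1})$, each costing $O(\log\log n)\le O(\sigma)$, giving the desired total of $O(\sigma^g)$. If you rewrite the middle paragraph so that the unit step is a wildcard LCP query for the whole $\phi P_i$ (with a separate Bille~\etal\ query only in the edge-middle case), your overall argument becomes the paper's proof.
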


\section{Wildcard Pattern Matching Queries for Small Alphabets}
\label{sec:case2}
In this section we consider the case when the alphabet size $\sigma< \log \log n$. We use the approach of Sections~\ref{sec:case0}
and~\ref{sec:case1}, but the notion of wildcard LCP queries is generalized.  A $t$-wildcard LCP query $(u,\tP)$ for a wildcard string $\tP=\phi^{k_1}P_1\phi^{k_2}P_2\ldots\phi^{k_d}P_d$ such that $\sum k_i=t$, finds locations of  all patterns $str(u)\circ P$, where $P=s_1s_2\ldots s_{k_1}P_1s_{k_1+1}\ldots s_{k_2}P_2\ldots s_{t-1}s_tP_d$ and $s_i$, $1\le i\le t$, are  arbitrary alphabet symbols, in the suffix tree. A $1$-wildcard LCP query, used in the previous sections, takes $O(\log \log n)$ time and can replace up to $\sigma$ standard wildcard queries. Hence, when the alphabet size $\sigma$ is small, we cannot achieve noteworthy speed-up in this way. A $t$-wildcard LCP query can replace up to $\sigma^t$ regular LCP queries and lead to more significant  speed-up even when $\sigma$ is very small. \no{The parameter $t$ of the $t$-wildcard subtree will be selected in such a way that a $t$-wildcard LCP query replaces $O(\sigma^t/\log\log n)$ regular LCP queries. the 
number of answered wildcard LCP queries (we will show below that it equals $2^t$)
does not exceed $O(\sigma^t/\log \log n)$. }
We will use iterated  wildcard subtrees in order to support $s$-wildcard LCP queries efficiently.  Our construction consists of two parts.  We mark selected nodes in the suffix tree $\cT$ and  divide it into subtrees $\cT_i$ of size $O(\tau_1)$; we keep a data structure that supports $t_1$-wildcard LCP queries on the subtree $\cT^m$ induced by marked nodes of $\cT$. We also mark selected nodes, further called secondary marked nodes, in each subtree $\cT_i$ and divide $\cT_i$ into $\cT_{i,j}$ of size $O(\tau_2)$. Let $\cT^m_i$ be the subtree induced by secondary 
marked node of $\cT_i$; we keep a data structure that answers standard wildcard LCP queries on $\cT_i^m$.  
 Details of our data structure and parameter values can be found below. 

{\bf Trees $\cT_i$ and $\cT^m$.}
Let $t_1=\log_{\sigma/2}\log\log n$ and $\tau_1=\sigma^{t_1}\log^{t_1+1}n$. We use the same scheme as in Section~\ref{sec:lessspace} to mark every $\tau_1$-th leaf and selected internal nodes, so that the suffix tree $\cT$ is divided into subtrees $\cT_i$ of size $O(\tau_1)$ and the number of marked nodes is $O(n/\tau_1)$. Trees $\cT_i$ correspond to groups $G_j(u)$ and $G(u,v)$ defined 
in section~\ref{sec:lessspace}.

Let $\cT^m$ be the tree induced by marked nodes. We iteratively augment $\cT^m$ with wildcard subtrees. For any marked internal 
node $u$, the (level-$1$) wildcard subtree $\cT_u$ is a compressed trie containing all strings $s$, such that $a\circ s=str(u,v_l)$ for some marked  light leaf descendant  $v_l$ of $u$. 
We also keep a level-$(i+1)$ wildcard subtree $\cT_{w}$ for every node $w$ in a level-$i$ wildcard subtree $\cT_u$. $\cT_w$ contains all strings $s$ such that $a\circ s=str(u,v_l)$  for some alphabet symbol $a$ and a light leaf descendants $v_l$ of $w$. We construct level-$i$ wildcard subtrees for $1\le i \le t_1$.  The parameter $t_1$ is chosen in such way that $\sigma^{t_1}=2^{t_1}\log\log n$ and $t=\log_{\sigma}\log \log n$. 
Every node in all level-$i$ wildcard trees has pointers to 
the corresponding locations in the alphabet tree $\cT$. 
Each pointer also contains information about the subtree $\cT_i$

The total number of nodes and pointers in wildcard subtrees is 
$(n/\tau_1)\sigma^{t_1}\log^{t_1}n$.  Level-$t$ wildcard subtrees can be used  to answer unrooted $t$-wildcard LCP queries on $\cT_m$ in $O(2^t\log \log n)$ time; our method is quite similar to the procedure for answering wildcard queries in~\cite{ColeGL04}.  Consider a query $(\ou,\tP)$, where $\ou$ is a location in the alphabet tree or in some $i$-wilcard subtree. We distinguish between the following four cases.
(i) If $\ou$ is on a tree edge and the next symbol is a wildcard,
 we simply move down by one symbol along that edge. 
(ii) Suppose that $\ou$ is on a tree edge $e$ and the next symbols are a string $P_n$ of alphabet symbols. Let $l$ denote the string label of the part of $e$ below $\ou$, $l=str(\ou,u')$ where $u'$ is the lower node on $e$.  
We compute $o=LCP(P_n,l)$. and move down by $\min(|l|,o)$ symbols along $e$. 
(iii) If $\ou$ is a node and the next unprocessed symbol in $\tP$
is a wildcard, our procedure branches and visits two locations:
we move down by one symbol along the edge to the heavy child of $\ou$ and visit the root of the wildcard tree $\cT_{\ou}$ (if $\ou$
is on a level-$i$ wildcard tree, we visit the root of the $(i+1)$-subtree $\cT_{\ou}$). 
(iv) If $\ou$ is a node and the next symbols are a string $P_n$ of alphabet symbols, we answer a standard LCP query $(\ou,P_n)$. 
The procedure is finished when we cannot move down from any location that is currently visited. 
The number of branching points is $2^t$ and we answer $2^t$ 
standard LCP queries. 
We need $O(\sigma^t)$ time to return from locations in wildcard trees to the corresponding locations in the alphabet tree. 
Thus the total time is $O(2^t\log \log n+\sigma^t)=O(\sigma^t)$.
When the search in $\cT^m$ is completed we can continue searching in subtrees $\cT_j$. 
\shortver{Data structures for subtrees $\cT_i$ are described in~\cite{arxivver}, where we show that an unrooted LCP query on $\cT_i$ can be answered in $O((\log^{(3)}n)^{1/2})$ time.}

\longver{
{\bf Data Structures for Subtrees $\cT_i$}
Let $\cT_i$ be a subtree of the alphabet tree $\cT$. We set  
$\tau_2= \log^2 n$. Again, we mark $O(n/\tau_2)$ nodes in $\cT_i$, so that $\cT_i$ is divided into $O(n/\tau_2)$ subtrees $\cT_{i,j}$. 
Marked nodes in $\cT_i$ will be called secondary marked nodes. 
Let  $\cT^m_i$ denote the subtree of $\cT_i$ induced by secondary marked nodes. We keep a data structure that answers standard  LCP queries on $\cT^m_i$. 
This data structure is the same as the data structure for $\cT^m$. But standard LCP queries on $\cT^m_i$ and its  wildcard trees can be answered in $\mu(n)=O(\sqrt{\log \tau_1})=O(\sqrt{\log \log \log n})$ time\footnote{In fact, a slightly better time $O(\sqrt{\log^{(3)}n/\log^{(4)}n})$ can be achieved. We use this slightly worse time to simplify the final Theorem.}; see Lemma~\ref{lemma:unrootedlarge} in Section~\ref{sec:hpath}. 
Finally, we store a data structure of Lemma~\ref{lemma:unrooted3} for each subtree $\cT_{i,j}$.  
Since we also keep a suffix array with $\tsa=O(1)$, we can answer LCP queries on $\cT_{i,j}$ in $O(1)$ time. 
We can use the combination of $\cT_i^m$ and subtrees $\cT_{i,j}$ 
to answer LCP queries on $\cT_i$ in $O((\log^{(3)}n)^{1/2})$ time. 
}

{\bf Wildcard String Matching.}
It follows from the above description that we can answer $t_1$-wildcard LCP queries in $O(\sigma^{t_1}\sqrt{\log^{(3)}n})$ time.  
Consider now an arbitrary pattern $\tP=\phi^{k_1}P_1\phi^{k_2}P_2\ldots \phi^{k_d}P_d$. We divide it into chunks $\tP[1]$, $\tP[2]$, $\ldots$, 
$\tP[r]$, such that each chunk $\tP[i]$, $i\ge 2$, contains exactly $t_1$ wildcard symbols. The chunk $P[1]$ contains  $v\le t_1$ wildcard symbols. 
 
We start at the root and find locations of all $\tP[1]=\phi^{k_1}P_1\ldots \phi^{k_f}P_f\phi^r$ where $r\le k_{f+1}$. If $\sum _{i=1}^f |P_f| > (\log \log n)\cdot \sigma^{t}$, we
answer at most $\sigma^t$ standard LCP queries in $O(\sigma^t\log\log n)=O(\sum_{i=1}^f|P_i|)$ time. 
If $\sum_{i=1}^f |P_i| \le (\log \log n)\cdot \sigma^t$, then the total length of $\tP[1]$ is at most $\ell=(\log \log n)\cdot \sigma^t+ t$. Since $\sigma<\log \log n$, there are 
$O((\log \log n)^{\ell})$ different patterns and each of this patterns fits into one machine word. Hence, all string patterns $P_s$ that match $\tP[1]$ can be generated in $O(\sigma^v)$ time. We keep a look-up table with locations of all strings $P$, such that $|P|\le \ell$ in $\cT$. Using this table we find locations of all $P_s$ that match $\tP[1]$ and occur in the source text. 
For every such location $\ou$, we answer queries $(\ou_1,\tP[2])$, $(\ou_2,\tP[3])$, $\ldots$, where $\ou_1=\ou$ and $\ou_i$ for $i>1$ is an answer to some query $(\ou_{i-1},\tP[i])$. It is easy to show that the total query time is $O(\sum_{i=1}^d|P_i| +\sigma^g\sqrt{\log^{(3)}n}+\occ)$.
\begin{theorem}
\label{theor:fast}
  If the alphabet size $\sigma=O(1)$ and $\sigma>2$, then there exists an $O(n\log^{\eps}n)$-bit data structure that  reports all $\occ$ occurrences of a wildcard pattern  $\phi ^{k_1}P_1\phi ^{k_2}P_2\ldots \phi ^{k_d}P_d$ in $O(\sum_{i=1}^d|P_i| +\sigma^g\sqrt{\log^{(3)}n}+\occ)$ time.
\end{theorem}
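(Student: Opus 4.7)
The plan is to assemble the two-level construction worked out in Section~\ref{sec:case2}. I would start from a compressed suffix array with $\tsa=O(1)$ via Lemma~\ref{lemma:csa}(c) (costing $O(n\log^{\eps}n)$ bits) and then mark $O(n/\tau_1)$ nodes of $\cT$ with $\tau_1=\sigma^{t_1}\log^{t_1+1}n$ and $t_1=\log_{\sigma/2}\log\log n$, so that the unmarked part splits into subtrees $\cT_i$ of size $O(\tau_1)$ and the marked part induces a skeleton $\cT^m$ of $O(n/\tau_1)$ nodes. Note that the assumption $\sigma>2$ is precisely what makes $t_1$ a well-defined positive parameter.

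Onto $\cT^m$ I would graft the iterated wildcard subtrees of depth $t_1$: a level-$1$ wildcard trie at every marked internal node and nested level-$(i{+}1)$ tries hanging off every node of a level-$i$ trie for $i<t_1$, each node carrying a return pointer to the relevant location of $\cT$ tagged with the enclosing $\cT_i$. A $t_1$-wildcard LCP query on $\cT^m$ is then answered by the branching procedure from Section~\ref{sec:case2}: at each wildcard met at a node we fork into the heavy-child edge and into the root of the next-level trie, and pending alphabet segments are resolved by one standard LCP on $\cT^m$ via the $O(\log\log n)$-time engine of~\cite{BilleGVV12}; with $2^{t_1}$ branches and $O(\sigma^{t_1})$ returns to $\cT$ the cost is $O(\sigma^{t_1})$. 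Inside each $\cT_i$ I would repeat the idea one level down with $\tau_2=\log^2n$: a secondary marking, a single-level wildcard augmentation of the induced $\cT^m_i$, and the structures of Lemma~\ref{lemma:unrooted3} on the residual $\cT_{i,j}$. Combined with $\tsa=O(1)$, this answers an arbitrary unrooted LCP inside $\cT_i$ in $O(\mu(n))=O(\sqrt{\log^{(3)}n})$ time.

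To serve a query $\tP=\phi^{k_1}P_1\cdots\phi^{k_d}P_d$ I would preprocess $\tP$ in $O(\sum_i|P_i|)$ time as in Section~\ref{sec:prelim} and chunk it so that every chunk after the first contains exactly $t_1$ wildcards. The first chunk is handled either by $\sigma^{t_1}$ individual LCP queries, when its alphabet portion is long enough to absorb the cost, or, when short, by enumerating all $\sigma^{t_1}=2^{t_1}\log\log n$ matching alphabet strings and looking them up in a universal table of short patterns of size $O(n^g)$; because $\sigma<\log\log n$ each such string fits in one machine word. Each subsequent chunk is dispatched at every surviving location by one $t_1$-wildcard LCP on $\cT^m$ followed by an $O(\mu(n))$-time cleanup inside the appropriate $\cT_i$. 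Across $O(g/t_1)$ chunks the aggregate branching is $\sigma^g$, giving query time $O(\sum_i|P_i|+\sigma^g\sqrt{\log^{(3)}n}+\occ\cdot\tsa)$, which matches the claim since $\tsa=O(1)$.

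The step I expect to be the main obstacle is the space accounting, which must all fit inside the $O(n\log^{\eps}n)$-bit budget already consumed by the compressed suffix array. The iterated wildcard tries on $\cT^m$ contribute $O((n/\tau_1)\sigma^{t_1}\log^{t_1}n)=O(n/\log n)$ nodes and return pointers; the secondary wildcard augmentations on the $\cT_i$'s contribute $O(n)$ bits in aggregate by the choice of $\tau_2$; the Lemma~\ref{lemma:unrooted3} structures on the $\cT_{i,j}$'s contribute $O(n(\log\log n)^3)=o(n\log^{\eps}n)$; and the universal look-up table of short patterns is $O(n^g)$ for an arbitrarily small $g<\eps$. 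Once these bounds are verified, together with the navigation bitvectors and predecessor structures of Section~\ref{sec:lessspace} charged to $O(n)$, every component slots into the budget and the theorem follows.
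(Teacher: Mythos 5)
Your proposal reconstructs the paper's Section~\ref{sec:case2} argument essentially verbatim: the two-level marking with $\tau_1=\sigma^{t_1}\log^{t_1+1}n$ and $\tau_2=\log^2 n$, the iterated level-$i$ wildcard subtrees on $\cT^m$ answering $t_1$-wildcard LCP queries via the four-case branching procedure, the secondary structures giving $O(\mu(n))=O(\sqrt{\log^{(3)}n})$ LCP time inside each $\cT_i$, and the chunking of $\tP$ into pieces of $t_1$ wildcards with the short-chunk look-up-table fallback; the space arithmetic ($O(n/\log n)$ wildcard-tree nodes, $O(n(\log\log n)^3)$ bits for the Lemma~\ref{lemma:unrooted3} layer, $O(n\log^\eps n)$ bits for the $\tsa=O(1)$ CSA) also matches. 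This is the same proof as the paper's, with the added and correct observation that $\sigma>2$ is exactly what keeps the base $\sigma/2$ of $t_1=\log_{\sigma/2}\log\log n$ greater than one.
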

We remark that the same query time as in Theorem~\ref{theor:fast}
can be also achieved for a non-constant $\sigma$; the space usage would grow to $O(n\log n)$ bits, however. To obtain 
this result, we would need to use  standard (uncompressed) suffix tree and suffix array for the source data.

\subparagraph*{Acknowledgement}
The second author wishes to thank Gonzalo Navarro for pointing him to~\cite{Rao02}.

\bibliographystyle{abbrv} 
\bibliography{wildcards}

\longver{
\newpage
\appendix 
\section{Auxiliary Data Structures for Unrooted LCP Queries}
\label{sec:hpath}
\subparagraph*{A Compact Data Structure for Heavy-Path Decomposition.}
Let $\cT$ denote a subtree of the suffix tree  induced by $f=O(\log^3n)$ 
consecutive suffixes. 

We mark every $\tau'$-th leaf of $\cT$ for a parameter $\tau'=\log \log n$. Then we mark internal nodes and all nodes of $\cT$ are divided into groups in the same way as in Section~\ref{sec:lessspace}.  For every group we store 
its topology in $O(\lg \lg n)$ bits. Hence, we can read the data about a group into one machine word. Using a look-up table of size $o(n)$, we can find 
the heavy path of any node $v$ such that $v$ is not marked and the leaf $v_h$ on that path.   For every marked node $u_m$ we explicitly store the index of the leaf $v_h$ that is on the same heavy path as $u_m$.  There are $O(f/\log\log n)$ marked nodes and each node in $\cT$ can be specified with $O(\log\log n)$ bits.
Thus we need $O(f)$ bits for all marked nodes. Hence, we can determine the heavy path of any node $u\in \cT$ in $O(1)$ time using $O(f)$ additional bits. 
We recall that a data structure $H_j$ uses $O(\log\log n)$ bits per node.
\subparagraph*{Proof of Lemma~\ref{lemma:unrooted4}.} 
\begin{proof}
We slightly modify the data structures $D(u)$ stored in the nodes of $\cT$. If $S(u)$ contains at most $(\log \log n)^2$ elements, 
then $R(u)$ is discarded. We can simply find any suffix of $S(u)$ and compare it to $P_j$ in $O(\tsa)$ time per suffix. 
Using binary search, we can find the predecessor of $P_j$ in $S(u)$ in $O(\tsa\cdot\log\log \log n)$ time.  
If $|S(u)|>(\log\log n)^2$, we select every $(\log \log n)^2$-th element of $S(u)$ and keep them in a set $S'(u)$. We maintain 
$D(u)$ on the ranks of elements in $S'(u)$. To find a predecessor of $P_j$ in $S(u)$ we first find its predecessor in $S'(u)$ 
using $D(u)$. When its predecessor in $S'(u)$ is known, we can search among $(\log \log n)^2$ consecutive suffixes as described 
above. 

We also use the same technique to reduce the space usage of data structures $H_j$. Recall that $H_j$ finds for any $d_q$ the lowest node $u_q$ on the heavy path $h_j$, such that the depth of $u$ does not exceed $d_q$.
We select every $(\log\log n)$-th node on $h_j$ and store the depths of selected 
nodes in the data structure $H_j$ implemented using Lemma~\ref{lemma:small}.   
Instead of $H_j$, we keep a data structure $H'_j$ that 
contains the string depths of every $\log\log n$-th node on a heavy path $h_j$. All $H_j$ need $O((f/\log\log n)\log \log n)=O(f)$ bits. 
To find the lowest node of depth at most $d_q$ on a path $h_j$, we find the predecessor $d_e$ of $d_q$ in $H_j$. Let $u_1$  be the node of depth $d_e$ 
on $h_j$ and let $u_2$ be the next node whose depth is stored in $H_j$.
Nodes $u_1$ and $u_2$ can be found in $O(\tsa)$ time using $H_j$. 
The node $u_q$ is between $u_1$ and $u_2$ and can be found in $O(\tsa\cdot \\log^{(3)}n)$ time by binary search.   
The total time to answer an unrooted LCP query is dominated by searching for predecessor in $S(u)$ and $H(u)$.
\end{proof}

\subparagraph*{LCP Queries on Large Sets}
The approach of  section~\ref{sec:smallset} can be also used to obtain a data structure that answers queries on an arbitrarily large set of suffixes in $O(\log \log n)$ time.  Let $\cT_1$ denote the subtree of the suffix tree $\cT$ induced by suffixes from a set 
$S$. Unrooted LCP queries $(u,P)$ for $u\in\cT_1$ can be answered in $O(\min(\log \log n, \sqrt{\log f/\log \log n}))$ time for $f=|S|$. 
\begin{lemma}
\label{lemma:unrootedlarge}
Let $S$ be a set of $f$ suffixes of a text $T$. There exists an $O(|S|\log^2 n)$-bits data structure that answers unrooted LCP 
queries on a subtree induced by $S$ in time  $O(\min(\log \log n, \sqrt{\log f/\log \log n}))$. 
\end{lemma}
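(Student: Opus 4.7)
My plan is to follow exactly the heavy-path scheme used in the proof of Lemma~\ref{lemma:unrooted2}, but to exploit the enlarged space budget in two ways: the predecessor structures will store global suffix ranks explicitly (so no recourse to the compressed suffix array is needed during search), and the predecessor data structures themselves will be chosen to obtain both terms in the claimed minimum.

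First I would apply heavy-path decomposition to $\cT_1$, the subtree induced by $S$. For every internal node $u$ and every child $u_i$ of $u$ not on the heavy path of $u$, I would maintain a predecessor structure $D(u_i)$ on the global ranks of all suffixes $str(\mathrm{parent}(u_i),v_\ell)$ for leaf-descendants $v_\ell$ of $u_i$; for every heavy path $h_j$ I would maintain a predecessor structure $H_j$ on the string depths of its nodes. By the heavy-path property, each leaf of $\cT_1$ appears in at most $O(\log f)$ structures $D(\cdot)$, giving $O(f\log f)$ entries in total across all $D(u_i)$, each of $O(\log n)$ bits; the $H_j$ together have $O(f)$ entries of $O(\log n)$ bits. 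This totals $O(f\log f\log n)=O(f\log^2 n)$ bits, matching the space claim, and leaves room to also store an uncompressed suffix array together with an RMQ structure on the LCP array so that $\mathrm{LCP}$ of any suffix of $P$ with any suffix of $T$ can be obtained in $O(1)$ time after $O(|P|)$ preprocessing.

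The query procedure for $(u,P_j)$ is the one transcribed in the proof of Lemma~\ref{lemma:unrooted2}: one LCP computation with the heavy-path suffix, one predecessor on $H$, (if the search continues) a branch-child identification, two predecessors on a single $D(u_j)$, two further LCP computations, and a level-ancestor lookup. This is $O(1)$ predecessor queries plus $O(1)$ LCP computations per unrooted LCP query. The LCPs cost $O(1)$; the overall query time is therefore dominated by the cost of a single predecessor query in structures of size at most $f$ over a universe of size $n$.

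To match the $O(\min(\log\log n,\sqrt{\log f/\log\log n}))$ bound I would implement each $D(u_i)$ and $H_j$ as a $y$-fast trie (giving $O(\log\log n)$ per query on universe $[1,n]$) combined with an exponential search tree à la Andersson exploiting word-level parallelism on words of $w=\Theta(\log n)$ bits (giving $O(\sqrt{\log f/\log\log n})$ per query); both fit in $O(\ell\log n)$ bits on a set of $\ell$ elements and can be run in parallel with the smaller of the two running times reported. The step I expect to be most delicate is verifying that the exponential search tree can indeed be tuned to deliver the $\sqrt{\log f/\log\log n}$ bound simultaneously with the linear-in-entries space requirement so that the aggregate space over all $D(u_i)$ and $H_j$ remains within $O(f\log^2 n)$ bits; once this is granted, everything else is a direct translation of the Cole \etal\ query routine already used in Section~\ref{sec:smallset}.
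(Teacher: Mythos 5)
Your approach matches the paper's: heavy-path decomposition with the $D(u_i)$/$H_j$ structures of Lemma~\ref{lemma:unrooted2}, implemented with a universe-sensitive predecessor structure (y-fast/van Emde Boas) for the $O(\log\log n)$ term and a cardinality-sensitive fusion-tree/exponential-search-tree structure for the other term; the paper cites Beame and Fich~\cite{BeameF02} to obtain exactly the $\sqrt{\log f/\log\log n}$ bound whose derivation you rightly flag as the delicate step.

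One slip to fix: $O(f\log^2 n)$ bits does \emph{not} leave room for an uncompressed suffix array plus LCP/RMQ machinery, which cost $\Theta(n\log n)$ bits and exceed the budget whenever $f=o(n/\log n)$ --- for instance in the invocation in Section~\ref{sec:lessspace}, where $f=O(n/(\sigma\log^2 n))$. As throughout Section~\ref{sec:smallset}, you should treat the (compressed) suffix array, inverse suffix array, and the $O(\tsa)$-time LCP functionality as a shared preprocessing resource, and charge the $O(f\log^2 n)$ budget only for the additional bits of the $D(u_i)$ and $H_j$; these indeed sum to $O(f\log f)\cdot O(\log n)=O(f\log^2 n)$ bits, as you compute. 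With that accounting corrected, the rest of your argument is the paper's argument.
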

\begin{proof}
 We consider the heavy path decomposition of $\cT_1$ and keep data structures $H_j$ and $D(u)$ defined in the proof of Lemma~\ref{lemma:unrooted2}. Since $S$ can be large, we implement $H_j$ and $D(v)$ as van Emde Boas data structures~\cite{veb} or using the result from~\cite{BeameF02} so that predecessor queries are answered 
in $O(\min(\log \log n, \sqrt{\log f/\log \log n}))$ time. The total number of elements in 
all $H_j$ and all $D(v)$ is $O(n)$ and $O(n\log n)$ respectively. Since each $H_j$ and $D(v)$ uses linear space, the total space usage is $O(n\log n)$ words of $\log n$ bits.  
\end{proof}

}

\end{document}